\newcommand{\ket}[1]{|#1\rangle}
\newtheoremstyle{noCaption}
{\topsep}
{\topsep}
{\itshape}
{}
{}
{}
{0pt}
{}%
\newtheorem{theorem}{Theorem}
\begin{document}
\title{Positive-partial-transpose distinguishability for lattice-type maximally entangled states}
\author{Zong-Xing Xiong$^{1}$,   Mao-Sheng Li$^{2}$,   Zhu-Jun Zheng$^{1}$,   Chuan-Jie Zhu$^{3}$,  Shao-Ming Fei$^{4,5}$}

\affiliation
{
{\footnotesize  {$^1$Department of Mathematics,
        South China University of Technology,   Guangzhou
        510640,  China}}\\
{\footnotesize  {$^2$Department of Mathematical of Science,
				Tsinghua  University,   Beijing
				100084,  China}}\\
{\footnotesize  {$^3$Department of Physics,
				Renmin University of China,   Beijing
				100872,  China}}\\
{\footnotesize {$^{4}$ School of Mathematical Sciences, Capital Normal University, Beijing
                100048, China}}\\
{\footnotesize  {$^{5}$ Max-Planck-Institute for Mathematics in
                the Sciences, Leipzig 04103, Germany}} \\
}

\begin{abstract}
We study the distinguishability of a particular type of maximally entangled states -- the ``lattice states" using a new approach of semidefinite program. With this, we successfully construct all sets of four ququad-ququad orthogonal maximally entangled states that are locally indistinguishable and find some curious sets of six states having interesting property of distinguishability. Also, some of the problems arose from \cite{CosentinoR14} about the PPT-distinguishability of ``lattice" maximally entangled states can be answered.

\end{abstract}

\pacs{03.  67.  Hk,   03.  65.  Ud }
\maketitle

\section{Introduction}
There is much interest in quantum information to understand the power and limitation of the set of quantum operations and measurements within the paradigm of local operations and classical communication (LOCC).  It is of inherent interest as a tool to understand entanglement and has potential applications in quantum cryptograph,   quantum communication and quantum algorithms.  One of the most basic problem to study LOCC is the challenge of distinguishing certain set of pure states. The problem of state distinguishing serves to explore fundamental question related to local access of global information and the relation between entanglement and locality,   which is of central interest in quantum information theory.  In \cite{Walgate00},   Walgate et.  al.  showed that any two orthogonal states can always be locally distinguished,   no matter whether the states are entangled or not.  In \cite{Bennett99},   Bennett et.  al.   showed that there exist a basis of product states that cannot be distinguished by LOCC,   which is known as ``nolocality without entanglement".  Later,   Horodecki et.  al.  \cite{Horodecki03}  showed a phenomenon of ``more nonlocality with less entanglement".  These facts demonstrate that there is no simple relation between entanglement and locality.

Although not an essential feature of indistinguishable sets of states,   entanglement seems to makes distinguhishability harder in some sense.  Following the result for two state by \cite{Walgate00},   works had been done for search of orthogonal maximally entangled states in higher dimensional systems that cannot be distinguished with LOCC \cite{Ghosh04,Fan04,Ghosh11,Ghosh01,Duan14,Singal2017}.  In \cite{Nathanson05},   Nathanson proved that it is possible to distinguish any $3$ orthogonal maximally entangled states in $\mathbb{C}^3\otimes\mathbb{C}^3$ with LOCC.  What's more,   they proved that any $k > d$ orthogonal maximally entangled states on $\mathbb{C}^d\otimes\mathbb{C}^d$ cannot be distinguished by LOCC.  So it's then an interesting question to ask whether there exist sets of $k \leq d$ orthogonal maximally entangled states that are not perfectly distinguishable by LOCC.  The first example was given by Yu et.  al.   \cite{Duan11},   who proposed a set of four ququad-ququad maximally entangled states in $\mathbb{C}^4\otimes\mathbb{C}^4$ that cannot be perfectly distinguished by positive partial transport (PPT) opeartions. This result is then generalized in \cite{Cosentino13,CosentinoR14} where $k \leq d$ ``lattice" maximally entangled states were constructed in $\mathbb{C}^d\otimes\mathbb{C}^d$ for $d=2^t$.

In this paper,  we study further the problem of distinguishing orthogonal ``lattice" maximally entangled states, which are tensor product of Bell states in a bipartite system. To do this, we first propose a different approach of semidefinete program than that of \cite{Cosentino13}. With this, we construct all sets of four ququad-ququad maximally entangled states that are PPT-indistinguishable and we find some interesting sets of six having interesting property of distinguishability. Also using this new approach, the problem whether ``genuine small" sets of lattice maximally entangled states that are PPT-indistinguishable exist, which arose from \cite{CosentinoR14}, can be answered (for case $t=3$ and $t=4$).

The rest of the paper is organized as follows: Section II presents a summary of the necessary background.  In section III we study the problem of PPT-distinguishability of lattice maximally entangled states using semidefinite program and we propose a new approach which will derive all the upcoming results. We study the problem of distinguishing four orthogonal ququad-ququad maximally entangled states in section IV and in Section V we will discuss the more general case -- the lattice maximally entangled states with $t\geq 3$. We give the conclusion about this paper in Section VI. There is also an Appendix, which gives details about proof of the theorems in Section IV and Section V.

\section{Preliminaries}
Let $\overrightarrow{v}=(v_1,   \cdots,   v_t) \in \{0,   1,   2,   3\}^t$ be a t-dimensional vector and let $|\chi_{\overrightarrow{v}}\rangle \in \mathbb{C}^{2^t}\otimes\mathbb{C}^{2^t}$ be the state given by the tensor product of Bell basis indexed by the vector $\overrightarrow{v}$,    namely,
$\ket{\chi_{\overrightarrow{v}}}=\ket{\psi_{v_1}}_{\mathcal{A}_1\mathcal{B}_1}\otimes \cdots \otimes \ket{\psi_{v_t}}_{\mathcal{A}_t\mathcal{B}_t}$ with $\mathcal{A}_1,  \cdots,  \mathcal{A}_t,  \mathcal{B}_1,  \cdots,  \mathcal{B}_t = \mathbb{C}^2$ and
\begin{align*}
|\psi_0\rangle=\frac{|00\rangle+|11\rangle}{\sqrt{2}}, \ |\psi_1\rangle=\frac{|01\rangle+|10\rangle}{\sqrt{2}},   \\
|\psi_2\rangle=\frac{|00\rangle-|11\rangle}{\sqrt{2}}, \ |\psi_3\rangle=\frac{|01\rangle-|10\rangle}{\sqrt{2}}
\end{align*}
are the standard Bell basis on $\mathbb{C}^{2}\otimes\mathbb{C}^{2}$. Such states $|\chi_{\overrightarrow{v}}\rangle$ are maximally entangled on $\mathcal{A} \otimes \mathcal{B}$,  where $\mathcal{A} = \mathcal{A}_1 \otimes \cdots \otimes \mathcal{A}_t$ and $\mathcal{B} = \mathcal{B}_1 \otimes \cdots \otimes \mathcal{B}_t$. At situation $t=2$ where $\mathcal{A} = \mathcal{A}_1 \otimes \mathcal{A}_2$ and $\mathcal{B} = \mathcal{B}_1 \otimes \mathcal{B}_2$,  the authous of \cite{Duan11} first called these states ``ququad-ququad" and we will follow their usage in this paper. When $t \geq 3$,  we may call them ``qucube-qucube" maximally entangled states,  where cube stands for cube when $t=3$ or hypercube for $t>3$,  or simply ``lattice states" following \cite{Cosentino13}. In what follows we will some times refer to maximally entangled states as ``MESs" for abbreviation.

Let ${\bf Herm}(\mathcal{A}\otimes \mathcal{B})$ and ${\bf Pos}(\mathcal{A}\otimes \mathcal{B})$ denote the sets of all Hermitian operators and positive semi-definite operators on $\mathcal{A}\otimes \mathcal{B}$ respectively.  We say that $M \geq N$ if $M - N$ is positive semi-definite for any Hermitian operators $M$ and $N$.  Let $T_\mathcal{A}(-)$ be the partial transpose map $T\otimes I_\mathcal{B}$ from $\mathcal{A}\otimes \mathcal{B}$ to $\mathcal{A}\otimes \mathcal{B}$,   where $T$ is the transpose map from $\mathcal{A}$ to $\mathcal{A}$,   $I_\mathcal{B}$ is the identity operator on $\mathcal{B}$.   We call a positive semi-definite operator $M\in \mathcal{A}\otimes \mathcal{B}$ a PPT operator if $T_\mathcal{A}(M)\geq 0$.   By ${\bf PPT}(\mathcal{A}: \mathcal{B})$ we denote the set of all PPT operators on the tensor product space $\mathcal{A}\otimes \mathcal{B}$.

\section{Semidefinite program for PPT distinguishability of the lattice states}
Two important properties about the partial transpose map $T_\mathcal{A}(-)$ will be used in the following discussions:  First,    $Tr(T_\mathcal{A}(X))=Tr(X)$ for all operators $X$ on $\mathcal{A}\otimes \mathcal{B}$,   namely,   partial transpose is trace-preseving;  Second,    $(T_\mathcal{A}(H))^\dag=T_\mathcal{A}(H)$ for all $H^\dag=H$,   that is,   partial transpose preserves Hermiticity.

Consider a set of pairwise orthogonal pure states $\{|\chi_1\rangle,   \cdots,   |\chi_k\rangle\}$ where $\langle\chi_i|\chi_j\rangle=\delta_{ij}\ (\forall i, j \in \{1, 2, \cdots, k\})$. Denote $\rho_j=|\chi_j\rangle\langle\chi_j|$ the corresponding density matrix of $|\chi_j\rangle$. Then $\{|\chi_1\rangle,   \cdots,   |\chi_k\rangle\}$ is called PPT-distinguishable if there exist PPT POVMs $P_1,   \cdots,   P_k$ such that $\langle\chi_j|P_i|\chi_j\rangle=\delta_{ij}$,    that is,  $\frac1d\sum_{i = 1}^k \langle P_i,   \rho_i \rangle = 1$,   where $\langle X,   Y \rangle = Tr(X^{\dag}Y)$.
Otherwise,    the set $\{|\chi_1\rangle,   \cdots,   |\chi_k\rangle\}$ is said to be PPT-indistinguishable. The maximum probability of distinguishing a set of states $\{|\chi_1\rangle,   \cdots,   |\chi_k\rangle\}$ by PPT measurements can be expressed as the optimal value of the following semidefinite program \cite{Cosentino13}:

\begin{equation}\begin{split}
\alpha  &=  \max_{P_1,   \cdots,   P_k}\phantom{=}\sum_{j=1}^k \frac1k \langle P_j,   \rho_j \rangle \\
                   &     s.  t.   \phantom{=}P_1 + \cdots + P_k = I_{\mathcal{A}} \otimes I_{\mathcal{B}}\\
                             &\phantom{s. t. }\phantom{=}P_1,   \cdots,   P_k \in {\bf PPT}(\mathcal{A}: \mathcal{B})
\end{split}\end{equation}
We denote the optimal value as $\alpha$ and $\{|\chi_1\rangle\,   \cdots,   |\chi_k\rangle\}$ are PPT-indistinguishable if and only if $\alpha < 1$. The dual problem as what follows can be easily obtained by simple calculation \cite{Cosentino13} and we denote the optimal value as $\beta$:

\begin{equation}\begin{split}
\beta  &=  \min_{Y,   Q_1,   \cdots,   Q_k}\phantom{=} \frac1k  Tr(Y) \\
                   &     s.  t.   \phantom{=}Y - \rho_j \geq T_\mathcal{A} (Q_j),    \   j=1,   \cdots,   k\\
                             &\phantom{s. t. }\phantom{=}Y \in {\bf Herm}(\mathcal{A} \otimes \mathcal{B}) \\
                             &\phantom{s. t. }\phantom{=}Q_1,   \cdots,   Q_k \in {\bf Pos}(\mathcal{A} \otimes \mathcal{B})
\end{split}\end{equation}
By the Slater's condition,   we know that the strong duality holds for this problem and so $\alpha = \beta$.  In \cite{Cosentino13},   the authors further tighten the dual problem by imposing equality instead of inequality constraints in the dual problem:
\begin{equation}\begin{split}
\beta'  &=  \min_{Y,   Q_1,   \cdots,   Q_k}\phantom{=} \frac1k  Tr(Y) \\
                &        s.  t.   \phantom{=}Y - \rho_j = T_\mathcal{A} (Q_j),    \   j=1,   \cdots,   k\\
                             &\phantom{s. t. }\phantom{=}Y \in {\bf Herm}(\mathcal{A} \otimes \mathcal{B}) \\
                             &\phantom{s. t. }\phantom{=}Q_1,   \cdots,   Q_k \in {\bf Pos}(\mathcal{A} \otimes \mathcal{B})
\end{split}\end{equation}
Partially transposing the above program and eliminating the variables $Q_j$'s yields:

\begin{equation}\begin{split}
\beta'  &=  \min_{Y',   Q_1,   \cdots,   Q_k}\phantom{=} \frac1k  Tr(Y') \\
                    &    s.  t.   \phantom{=}Y' \geq T_\mathcal{A} (\rho_j),    \   j=1,   \cdots,   k\\
                             &\phantom{s. t. }\phantom{=}Y' \in {\bf Herm}(\mathcal{A} \otimes \mathcal{B})
\end{split}\end{equation}
where $Y' = T_\mathcal{A}(Y)$. It's clear that $\beta \leq \beta'$. The authors then used this tightened version to construct $d=2^n$ maximally entangled states in $\mathbb{C}^d\otimes\mathbb{C}^d$ which are PPT-indistinguishable,  with the minimum $\beta' <1$. In \cite{Li2015},  examples of $d$ PPT-indistinguishable maximally entangled states in $\mathbb{C}^d\otimes\mathbb{C}^d$ for any $d \geq 4$ are further constructed,  using also the semidefinite program (4). Notice that the dual program (2) can also be expressed as the following:
\begin{equation}\begin{split}
\beta & =  \min_{\substack{Y,   Q_1,   \cdots,   Q_k, \\ R_1,   \cdots,   R_k}}\phantom{=} \frac1k  Tr(Y)  \\
                       & s.  t.  \phantom{=}Y - \rho_j = R_j + T_\mathcal{A} (Q_j),     \  j=1,   \cdots,   k \\
                             & \phantom{s. t. }\phantom{=}Y \in {\bf Herm}(\mathcal{A} \otimes \mathcal{B}) \\
                             & \phantom{s. t. }\phantom{=}Q_1,   \cdots,   Q_k,   R_1,   \cdots,   R_k \in {\bf Pos}(\mathcal{A} \otimes \mathcal{B})
\end{split}\end{equation}
In program (4), it is the $R_j$'s that have been forced to be zero. Here,  we proposed a different way to exploit the power of semidefinite programing. Instead of zeroing $R_j$'s as in \cite{Cosentino13} and \cite{Li2015}, we let the $Q_j$'s to be zero:
\begin{equation}\begin{split}
\beta''  &=  \min_{Y,   R_1,   \cdots,   R_k}\phantom{=} \frac1k  Tr(Y) \\
                      &  s.  t.  \phantom{=}Y - \rho_j = R_j,    \   j=1,   \cdots,   k  \\
                             &\phantom{s. t. }\phantom{=}Y \in {\bf Herm}(\mathcal{A} \otimes \mathcal{B})  \\
                             &\phantom{s. t. }\phantom{=}R_1,   \cdots,   R_k \in {\bf Pos}(\mathcal{A} \otimes \mathcal{B})
\end{split}\end{equation}
Apparently we have $\beta \leq \beta''$. Eliminating $R_j$'s we get a more readable form:
\begin{equation}\begin{split}
\beta''  &=  \min_{Y,   R_1,   \cdots,   R_k}\phantom{=} \frac1k  Tr(Y) \\
                  &      s.  t.   \phantom{=}Y \geq \rho_j,    \  j=1,   \cdots,   k\\
                             &\phantom{s. t. }\phantom{=}Y \in {\bf Herm}(\mathcal{A} \otimes \mathcal{B})
\end{split}\end{equation}
Since $\rho_j=|\chi_j\rangle\langle\chi_j|
$ and states $\{|\chi_1\rangle\,   \cdots,   |\chi_k\rangle\}$ are mutually orthogonal, $Y \geq \rho_j(j=1, \cdots, k)$ obviously imply $Y \geq \sum_{j=1}^{k}\rho_j$. Therefore the optimal solution of  program (7) would be $Y^* = \sum_{j=1}^k \rho_j$ with the optimal value being $\beta'' = 1$. We thus have the following theorem:
\begin{theorem}
Any set of $k$ orthogonal pure states $\big\{|\chi_1\rangle\,   \cdots,   |\chi_k\rangle\big\}$ can be perfectly distinguished by PPT measurement if and only if the feasible solution $\big\{Y = \sum_{i = 0}^k\rho_i; Q_j = 0; R_j = \sum_{i \neq j}\rho_i  \big| j = 1,   \cdots,   k\big\}$ of semidefinite program (5) is an optimal solution.
\end{theorem}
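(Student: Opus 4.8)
The plan is to reduce the stated equivalence to one elementary fact --- that the displayed feasible point of program (5) attains objective value exactly $1$ --- and then to finish using strong duality together with the criterion ``PPT-distinguishable $\iff \alpha=1$'' already recorded above.

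First I would verify feasibility of the proposed point $\{\,Y=\sum_{i=1}^k\rho_i,\ Q_j=0,\ R_j=\sum_{i\neq j}\rho_i\,\}$ for (5). Substituting into the equality constraint gives $Y-\rho_j=\sum_{i\neq j}\rho_i=R_j=R_j+T_\mathcal{A}(0)$, so the constraint is met; moreover $Y$ is Hermitian and $Q_j=0$, $R_j\geq 0$ because every $\rho_i$ is a rank-one projector and hence positive. Its objective value is $\frac1k\tr(Y)=\frac1k\sum_{i=1}^k\tr(\rho_i)=1$, since $\tr(\rho_i)=1$ for each $i$.

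Next I would note that $\beta$ is by definition the \emph{minimum} of the objective of (5) over all feasible points, so every feasible point has value at least $\beta$; in particular our point forces $\beta\leq 1$, and it is optimal precisely when its value coincides with the optimum, i.e.\ when $\beta=1$. Finally, strong duality ($\alpha=\beta$ by Slater's condition) together with the definition of PPT-distinguishability ($\alpha=1$) yields: the set is PPT-distinguishable iff $\alpha=1$, iff $\beta=1$, iff the displayed feasible point is optimal. This is exactly the asserted equivalence.

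I do not expect a genuine obstacle: the argument is bookkeeping layered on top of the strong duality already invoked in the text. The only point needing mild care is confirming that the \emph{equality} constraint of (5) is satisfied with $Q_j=0$ while keeping $R_j\geq 0$ --- that is, that the slack $Y-\rho_j$ can be absorbed entirely into the positive part $R_j$ with no partial-transpose contribution. This is immediate from orthogonality of the states, which makes $\sum_{i\neq j}\rho_i$ a positive operator.
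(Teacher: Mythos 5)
Your proposal is correct and takes essentially the same route as the paper's own proof: both arguments hinge on the displayed point being feasible for program (5) with objective value $1$, so that its optimality is equivalent to $\beta=1$, which by strong duality ($\alpha=\beta$) and the criterion ``perfectly PPT-distinguishable $\iff \alpha=1$'' gives the stated equivalence. Your write-up is marginally more self-contained in that it verifies feasibility and the value $\frac1k\tr(Y)=1$ directly rather than routing through the auxiliary quantity $\beta''$, but the underlying idea is identical.
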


\begin{proof}
The set $\big\{|\chi_1\rangle\,   \cdots,   |\chi_k\rangle\big\}$ can be perfectly distinguished by PPT measurement if and only if $\alpha = 1$. From the above discussion, in order for $\alpha = \beta < \beta'' = 1$,  the feasible solution $\big\{Y = \sum_{i = 0}^k\rho_i; Q_j = 0; R_j = \sum_{i \neq j}\rho_i \big| j = 1,   \cdots,   k\big\}$(which has objective value 1) of problem (5) must not be an optimal solution. Conversely, if it is not an optimal solution, naturally $\alpha = \beta < 1$
\end{proof}

Now let's see how $T_\mathcal{A}(-)$ acts on the lattice states. The action of the partial transpose on the Bell basis can be easily obtained by routine calculation as:
\begin{align*}
T_\mathcal{A}(\psi_0)=\frac12 I-\psi_2, \ T_\mathcal{A}(\psi_1)=\frac12 I-\psi_3,   \\
T_\mathcal{A}(\psi_2)=\frac12 I-\psi_0, \ T_\mathcal{A}(\psi_3)=\frac12 I-\psi_1.
\end{align*}
where $\psi_i=|\psi_i\rangle\langle\psi_i| \ (i \in \{0,1,2,3\})$ are the corresponding density matrices. That is,
\begin{equation*}
T_\mathcal{A}\begin{bmatrix}
\psi_0 & \psi_1 & \psi_2 & \psi_3
\end{bmatrix} = \begin{bmatrix}
\psi_0 & \psi_1 & \psi_2 & \psi_3
\end{bmatrix} \cdot P
\end{equation*}
where
\begin{equation*}
P = \frac12 \begin{bmatrix}
1 & 1 & -1 & 1 \\
1 & 1 & 1 & -1 \\
-1 & 1 & 1 & 1 \\
1 & -1 & 1 & 1 \\
\end{bmatrix}
\end{equation*}
Notice that any lattice MESs can be written as $\ket{\chi_{\overrightarrow{v}}}_{\mathcal{A}\otimes \mathcal{B}}=\ket{\psi_{v_1}}_{\mathcal{A}_1\otimes \mathcal{B}_1}\otimes \cdots \otimes \ket{\psi_{v_t}}_{\mathcal{A}_t\otimes \mathcal{B}_t}$ where $\overrightarrow{v}=(v_1, \cdots, v_t)\in \{0,1,2,3\}^t$,  then it's obvious that
$$
T_\mathcal{A}\big(\chi_{\overrightarrow{v}}\big)=T_{\mathcal{A}_1}\big(\psi_{v_1}\big)\otimes \cdots \otimes T_{\mathcal{A}_t}\big(\psi_{v_t}\big).
$$
So
\begin{equation*}
T_\mathcal{A}\begin{bmatrix}
\chi_{00\cdots 0} & \cdots & \chi_{33\cdots 3}
\end{bmatrix} = \begin{bmatrix}
\chi_{00\cdots 0} & \cdots & \chi_{33\cdots 3}
\end{bmatrix} \cdot P^{\otimes t}
\end{equation*}
Namely,  $P^{\otimes t}$ is the transition matrix of linear transformation $T_\mathcal{A}(-)$ on $\mathbb{C}^{2^t}\otimes\mathbb{C}^{2^t}$ about basis $\big\{ \chi_{i_1 i_2 \cdots i_t} = \psi_{i_1}\otimes\psi_{i_2} \otimes \cdots \otimes\psi_{i_t}\big\}_{i_1 i_2 \cdots i_t \in \{0, 1, 2, 3\}^t}$

The following theorem that was presented in \cite{Cosentino13} shows that in the case where the set to be distinguished contains only lattice states,   the semidefinite program simplifies remarkably.

\begin{theorem}
If $\rho_1,   \cdots,   \rho_k$ are all lattice states,  the probability of successfully distinguishing them by PPT measurements can be expressed as the optimal value of a linear program.
\end{theorem}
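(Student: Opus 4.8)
The plan is to reduce the semidefinite program (1) to a linear program by a symmetrization (twirling) argument that forces the optimal measurement operators to be diagonal in the lattice basis $\{\chi_{\overrightarrow v}\}_{\overrightarrow v\in\{0,1,2,3\}^t}$, after which every constraint and the objective become linear. Writing $\sigma_0,\sigma_1,\sigma_2,\sigma_3$ for the single-qubit Pauli operators normalized so that $\ket{\psi_b}=(\sigma_b\otimes I)\ket{\psi_0}$, and $\sigma_{\overrightarrow a}=\sigma_{a_1}\otimes\cdots\otimes\sigma_{a_t}$, I would consider the group of local unitaries $\mathcal G=\{\,\sigma_{\overrightarrow a}\otimes\overline{\sigma_{\overrightarrow a}}:\overrightarrow a\in\{0,1,2,3\}^t\,\}$ on $\mathcal A\otimes\mathcal B$. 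Using $(M\otimes I)\ket{\psi_0}=(I\otimes M^{\mathrm T})\ket{\psi_0}$ together with $(\,U\otimes\overline U\,)\ket{\psi_0}=\ket{\psi_0}$, a one-line computation on each Bell pair gives $(\sigma_{\overrightarrow a}\otimes\overline{\sigma_{\overrightarrow a}})\ket{\chi_{\overrightarrow v}}=\pm\ket{\chi_{\overrightarrow v}}$, so every element of $\mathcal G$ fixes each lattice density matrix $\rho_j=\chi_{\overrightarrow{v}_j}$.

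Next I would introduce the twirl $\mathcal T(X)=\frac1{|\mathcal G|}\sum_{\overrightarrow a}(\sigma_{\overrightarrow a}\otimes\overline{\sigma_{\overrightarrow a}})\,X\,(\sigma_{\overrightarrow a}\otimes\overline{\sigma_{\overrightarrow a}})^\dagger$ and check that it sends feasible points of (1) to feasible points of the same objective value. Being a convex combination of unitary conjugations, $\mathcal T$ preserves ${\bf Pos}(\mathcal A\otimes\mathcal B)$ and fixes $I_{\mathcal A}\otimes I_{\mathcal B}$, hence preserves both positivity and the completeness constraint $\sum_j P_j=I$. It preserves the PPT cone because each generator is a product operator and $T_{\mathcal A}\big((A\otimes B)X(A\otimes B)^\dagger\big)=(\overline A\otimes B)\,T_{\mathcal A}(X)\,(\overline A\otimes B)^\dagger$, so the partial transpose of a twirled PPT operator is again a convex combination of local-unitary conjugates of $T_{\mathcal A}(X)\ge0$. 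Finally, since $\mathcal T$ is self-adjoint for $\langle\cdot,\cdot\rangle$ and $\mathcal T(\rho_j)=\rho_j$, we get $\langle\mathcal T(P_j),\rho_j\rangle=\langle P_j,\rho_j\rangle$, so the objective is unchanged. Thus any optimal POVM may be replaced by $\{\mathcal T(P_j)\}$, and it suffices to optimize over the range of $\mathcal T$, i.e.\ the commutant of $\mathcal G$.

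The key step is to identify this commutant with the commutative algebra $\mathcal D=\spn\{\chi_{\overrightarrow v}\}$ of lattice-diagonal operators, and I expect this to be the main obstacle. Since every $g\in\mathcal G$ is diagonal in the lattice basis with $\pm1$ entries $\varepsilon_{\overrightarrow a,\overrightarrow v}$ recording whether $\sigma_{\overrightarrow a}$ and $\sigma_{\overrightarrow v}$ commute ($+$) or anticommute ($-$), an operator $M$ commutes with all of $\mathcal G$ iff $M_{\overrightarrow v\,\overrightarrow w}=0$ whenever $\varepsilon_{\overrightarrow a,\overrightarrow v}\ne\varepsilon_{\overrightarrow a,\overrightarrow w}$ for some $\overrightarrow a$; the nondegeneracy of the Pauli commutation relations then forces $\overrightarrow v=\overrightarrow w$, so $M\in\mathcal D$. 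Once this is in place I would write $P_j=\sum_{\overrightarrow v}(p_j)_{\overrightarrow v}\,\chi_{\overrightarrow v}$ and read off the program in the coordinates $(p_j)_{\overrightarrow v}\ge0$: the objective becomes $\frac1k\sum_j (p_j)_{\overrightarrow{v}_j}$, completeness becomes $\sum_j (p_j)_{\overrightarrow v}=1$ for every $\overrightarrow v$, and — using that $P^{\otimes t}$ is the matrix of $T_{\mathcal A}$ on the lattice basis — the constraint $T_{\mathcal A}(P_j)\ge0$ becomes nonnegativity of every entry of $P^{\otimes t}p_j$. All the data are then linear, so $\alpha$ equals the optimal value of this linear program, as claimed.
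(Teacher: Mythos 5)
Your proof is correct, and it performs the same essential reduction as the paper --- projecting every feasible point of the distinguishability program onto the algebra of lattice-diagonal operators, after which the semidefinite constraints collapse to sign conditions on coordinates --- but it reaches that reduction by a genuinely more self-contained route. The paper works on the dual program (5): it defines the Bell-basis dephasing channel $\Phi=\Delta^{\otimes t}$ outright, cites \cite{Cosentino13} for the key fact that $\Phi$ commutes with $T_{\mathcal A}$, and applies $\Phi$ to $(Y,Q_j,R_j)$ to produce a diagonal feasible point of equal objective value. You instead symmetrize the primal program (1) by the Pauli twirl $\mathcal T$ over $\mathcal G=\{\sigma_{\overrightarrow{a}}\otimes\overline{\sigma_{\overrightarrow{a}}}\}$; since Bell-basis dephasing is exactly this group average, your $\mathcal T$ coincides with the paper's $\Phi$, but you supply from scratch the three facts the paper takes as given: invariance of the lattice states under $\mathcal G$, preservation of the PPT cone via the covariance $T_{\mathcal A}\big((A\otimes B)X(A\otimes B)^\dagger\big)=(\overline A\otimes B)\,T_{\mathcal A}(X)\,(\overline A\otimes B)^\dagger$, and the identification of the range of $\mathcal T$ with the diagonal algebra through the nondegeneracy of the Pauli commutation pairing (which is indeed the one point requiring care, and your argument for it is sound). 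What your version buys is a conceptual explanation of \emph{why} the diagonal restriction is lossless --- the diagonal algebra is the fixed-point algebra of a symmetry group of the whole program --- together with independence from the cited commutation lemma; what the paper's version buys is that its linear program (8) comes out directly in the dual variables $y, q^{(j)}, r^{(j)}$ that the remainder of the paper manipulates. The LP you obtain from the primal and the LP (8) obtained from the dual are duals of one another, so both represent the same optimal value $\alpha=\beta$.
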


For the clarity of subsequent discussion, we will give a simple proof here. \begin{proof} Let $\Delta: L(\mathbb{C}^2\otimes\mathbb{C}^2) \longrightarrow L(\mathbb{C}^2\otimes\mathbb{C}^2)$ be the quantum operation defined as follows:
$$
\Delta(-)=\sum_{i=0}^3 |\psi_i\rangle\langle\psi_i| (-) |\psi_i\rangle\langle\psi_i|,
$$
namely, the dephasing channel under Bell basis.

Let $\Phi=\Delta^{\otimes t}: L(\mathbb{C}^{2^t}\otimes\mathbb{C}^{2^t}) \longrightarrow L(\mathbb{C}^{2^t}\otimes\mathbb{C}^{2^t})$. Then for any
$$
\sigma = \sum_{\substack{i_1\cdots i_t,   j_1\cdots j_t \\ \in \{0,   1,   2,   3\}^t}} \sigma_{i_1\cdots i_t,   j_1\cdots j_t}|\psi_{i_1}\cdots  \psi_{i_t}\rangle\langle \psi_{j_1}\cdots \psi_{j_t}|
$$
we have
$$
\phantom{=}\Phi\left( \sigma \right) = \sum_{\substack{i_1\cdots i_t \\ \in \{0,   1,   2,   3\}^t}} \sigma_{i_1\cdots i_t,   i_1\cdots i_t}|\psi_{i_1}\cdots \psi_{i_t}\rangle\langle \psi_{i_1}\cdots  \psi_{i_t}|.
$$
It's obvious that $\Phi$ is positive and trace-preserving. It acts invariantly on the lattice states: $\Phi(\chi_{\overrightarrow{v}}) = \chi_{\overrightarrow{v}}$.  Moreover, $\Phi$ commutes with the partial transpose $T_\mathcal{A}$, that is, $\Phi(T_\mathcal{A}(-))=T_\mathcal{A}(\Phi(-))$ \cite{Cosentino13}. For any feasible solution $\{Y,   Q_1,   \cdots,   Q_k,   R_1,   \cdots,   R_k\}$ of program (5), by applying $\Phi$ to the entire program, we can always find another solution $\{\Phi(Y),   \Phi(Q_1),   \cdots,   \Phi(Q_k),   \Phi(R_1),   \cdots,   \Phi(R_k)\}$ consisting only of diagonal operators, while the objective function takes the same value.  Since the $\Phi(Q_j)$'s,   $\Phi(R_j)$'s and $\Phi(Y)$ are all diagonal,  the semidefinte program (5) now becomes a linear program,   with the variables being the diagonal elements $r_{i_1\cdots i_t}^{(j)},   q_{i_1\cdots i_t}^{(j)},   y_{i_1\cdots i_t}(i_1\cdots i_t \in \{0,   1,   2,   3\}^t; j = 1, \cdots, k)$ respectively.  Suppose further that the $k$ lattice MESs to be distinguished are $|\chi_{\overrightarrow{v}^{(j)}}\rangle = |\psi_{v_1^{(j)}}\rangle\otimes\cdots\otimes|\psi_{v_t^{(j)}}\rangle \ (j=1,  \cdots,   k)$ and $\rho_j= |\chi_{\overrightarrow{v}^{(j)}}\rangle \langle\chi_{\overrightarrow{v}^{(j)}}| $,  then the linear program can be easily obtained as the following:
\begin{equation}\begin{split}
\beta  =&  \min  \frac1k \sum_{i_1=0}^3 \cdots \sum_{i_t=0}^3 y_{i_1\cdots i_t}  \\
                        s.  t.   &\phantom{=}y_{i_1\cdots i_t} - r_{i_1\cdots i_t}^{(j)} - \sum_{l_1\cdots l_t}P_{i_1\cdots i_t,  l_1\cdots l_t}^{\otimes t} \cdot q_{l_1\cdots l_t}^{(j)}  \\
                       & \phantom{\phantom{=}y_{i_1\cdots i_t} - r_{i_1\cdots i_t}^{(j)} - \sum_{l_1\cdots l_t}P_{i_1\cdots i_t,  l_1\cdots l_t}^{\otimes t}}  = \delta_{{i_1\cdots i_t},{v_1^{(j)}\cdots v_t^{(j)}}}    \\
                             &\phantom{=}r_{i_1\cdots i_t}^{(j)} \geq 0 ,  q_{i_1\cdots i_t}^{(j)} \geq 0   \\
                             &\phantom{=}(i_1\cdots i_t \in \{0,   1,   2,   3\}^t; \ j=1,  \cdots,   k)
\end{split}\end{equation}
where ``$\delta$" is the kronecker delta and $\delta_{\overrightarrow{v},\overrightarrow{w}} = 1$ if and only if $\overrightarrow{v}=\overrightarrow{w}$.
\end{proof}

In what follows we will combine Theorem 1 and Theorem 2 to search sets of $k$ orthogonal lattice MESs in $\mathbb{C}^{2^t}\otimes\mathbb{C}^{2^t}$ that are PPT-indistinguishable. For convenience of analysis we will write linear program (8) in standard matrix form. Denote
\begin{align*}
A = \small \arraycolsep = 0.2em   \left[
\begin{array}{cc|cccc|cccc}
 I & -I & -I &  0 & \cdots & 0  &-P^{\otimes t}&    0   & \cdots &   0  \\
 I & -I & 0  & -I & \cdots & 0  &    0   &-P^{\otimes t}& \cdots &   0  \\
 \vdots & \vdots & \vdots & \vdots & \ddots & \vdots & \vdots & \vdots & \ddots &   \vdots \\
 I & -I & 0  &  0 & \cdots & -I &    0   &    0   & \cdots &-P^{\otimes t}  \\
\end{array}
\right],
\end{align*}
\begin{align*}
\overrightarrow{b} =
\left[
\begin{array}{c}
\overrightarrow{b_1} \\
\overrightarrow{b_2} \\
\vdots \\
\overrightarrow{b_k}
\end{array}
\right]  \bigg(\small \text{where} \ \overrightarrow{b_j} =
\left[\begin{array}{c}
0 \\
\vdots \\
0 \\
1 \\
0 \\
\vdots\\
0 \\
\end{array}\right]
\begin{array}{@{}l}
\\
\phantom{\vdots}\\
\\
\leftarrow  v_1^{(j)}\cdots v_t^{(j)} th\ row \\
\ \ \ \ \ \ \ \ ( j=1, \cdots , k )\\
\phantom{\vdots}\\
{4^t\times 1}\\
\end{array}\bigg) ,
\end{align*}
\begin{equation*}
\overrightarrow{x}=
\small \left[
\begin{array}{c}
\overrightarrow{\alpha}\\
\overrightarrow{\beta}\\
\hline
\overrightarrow{r}^{(1)}\\
\overrightarrow{r}^{(2)}\\
\vdots\\
\overrightarrow{r}^{(k)}\\
\hline
\overrightarrow{q}^{(1)}\\
\overrightarrow{q}^{(2)}\\
\vdots\\
\overrightarrow{q}^{(k)}\\
\end{array}
\right] \normalsize \ \text{and} \ \ \ \overrightarrow{c}= \frac1k \cdot
\small \left[
\begin{array}{c}
\overrightarrow{\mathbf{1}}\\
\overrightarrow{-\mathbf{1}}\\
\hline
\overrightarrow{\mathbf{0}}\\
\overrightarrow{\mathbf{0}}\\
\vdots\\
\overrightarrow{\mathbf{0}}\\
\hline
\overrightarrow{\mathbf{0}}\\
\overrightarrow{\mathbf{0}}\\
\vdots\\
\overrightarrow{\mathbf{0}}\\
\end{array}
\right]
\end{equation*}
\normalsize
where all block matrices or block vectors have dimension $4^t$ and $\overrightarrow{r}^{(j)}=[r_{00\cdots 0}^{(j)}, \cdots, r_{33\cdots 3}^{(j)}]^T,  \overrightarrow{q}^{(j)}=[q_{00\cdots 0}^{(j)}, \cdots, q_{33\cdots 3}^{(j)}]^T(j=1,\cdots,k)$, $\overrightarrow{y} = \overrightarrow{\alpha} - \overrightarrow{\beta} = [\alpha_{00\cdots 0}-\beta_{00\cdots 0}, \cdots, \alpha_{33\cdots 3}-\beta_{33\cdots 3}]^T$ are vectors of the corresponding variables. $\overrightarrow{\mathbf{0}}$, $\overrightarrow{\mathbf{1}}$ represent $4^t \times 1$ vectors whose elements are all $0$'s or $1$'s respectively. Note that by ``$v_1^{(j)}\cdots v_t^{(j)} th$" we mean quaternary number (ranging from $0$ to $4^t-1$). Moreover,  we denote by $\overrightarrow{x} \geq 0$ all elements of $\overrightarrow{x}$ being nonnegative. Therefore, we have
\begin{equation}\begin{split}
\beta  &= \frac1k\ \min_{\overrightarrow{x}} \overrightarrow{c}^T \cdot \overrightarrow{x}\\
          & s.  t.  \phantom{=} A\overrightarrow{x} = \overrightarrow{b} \\
                &\phantom{s. t. } \phantom{=} \overrightarrow{x} \geq 0
\end{split}\end{equation}
which is exactly the same program as (8).

Now by Theorem 1,   $\alpha = \beta = 1$ if and only if the following feasible solution of linear program (9) is an optimal solution:
\begin{equation}\left\{
\begin{array}{l}
\overrightarrow{\alpha}=\sum_{i=1}^k\overrightarrow{b_i}  \\
\overrightarrow{\beta}=\overrightarrow{\mathbf{0}}  \\
\overrightarrow{q}^{(j)}=\overrightarrow{\mathbf{0}}  \\
\overrightarrow{r}^{(j)}=\sum_{i\neq j}\overrightarrow{b_i} \ \ \ \ \ ( j=1, \cdots , k ). \\
\end{array} \right.
\end{equation}

It's obvious that $\text{rank}(A)=k \cdot 4^t$.  Notice that the columns in $A$ corresponding to the following $k\cdot 4^t$ variables (which contain all the nonzero ones in (10)):
\begin{equation*}\begin{split}
& \alpha_{v_1^{(j)} \cdots v_t^{(j)}},  \\
& r_{i_1 \cdots i_t \neq v_1^{(j)} \cdots v_t^{(j)}}^{(j)} \ \ \ \ \ (j=1, \cdots , k)
\end{split}\end{equation*}
form an invertible matrix.
Denote $M$ this matrix and $N$ the matrix consisting of the rest columns (ignoring the order). $M$ and its inverse can be obtain as what follows:
\begin{equation*}
M = \small \arraycolsep = 0.1em  \left[\begin{array}{cccc|cccc}
\overrightarrow{b_1} & \overrightarrow{b_2} & \cdots & \overrightarrow{b_k} &  -I_{:\widehat{\overrightarrow{v}^{(1)}}}  &   0    &  \cdots  &   0    \\
\overrightarrow{b_1} & \overrightarrow{b_2} & \cdots & \overrightarrow{b_k} &   0    &  -I_{:\widehat{\overrightarrow{v}^{(2)}}}  &  \cdots  &   0    \\
\vdots & \vdots & \ddots & \vdots & \vdots & \vdots  & \ddots  & \vdots   \\
\overrightarrow{b_1} & \overrightarrow{b_2} & \cdots & \overrightarrow{b_k} &   0   &   0  &   \cdots  &  -I_{:\widehat{\overrightarrow{v}^{(k)}}}
\end{array}\right]
\end{equation*}
and
\begin{equation*}
M^{-1} = \small \left[\begin{array}{cccc}
 \overrightarrow{b_1}^T &            0           &        \cdots          &        0               \\
          0             & \overrightarrow{b_2}^T &        \cdots          &        0               \\
        \vdots          &         \vdots         &        \ddots          &       \vdots           \\
          0             &            0           &        \cdots          & \overrightarrow{b_k}^T \\
\hline
-I_{\widehat{\overrightarrow{v}^{(1)}}:} &  B^{(1)}_2  &  \cdots  &  B^{(1)}_k   \\
  B^{(2)}_1     & -I_{\widehat{\overrightarrow{v}^{(2)}}:} & \cdots &  B^{(2)}_k   \\
  \vdots     &  \vdots  & \ddots &  \vdots     \\
  B^{(k)}_1     &  B^{(k)}_2  & \cdots  & -I_{\widehat{\overrightarrow{v}^{(k)}}:}
\end{array}\right]
\end{equation*}
where $B_{l}^{(i)}(i\neq l)$ is the $(4^t-1)\times 4^t$ matrix obtained by firstly replacing the $v_1^{(l)}\cdots v_t^{(l)}$th row of $\mathbf{0}_{4^t\times 4^t}$ as $\overrightarrow{b_l}^T$ and secondly deleting the $v_1^{(i)}\cdots v_t^{(i)}$th row. At here and what follows in this paper, by subscript index $_{:\overrightarrow{v}^{(j)}}$ we mean the $v_1^{(j)}\cdots v_t^{(j)}$th (quaternary number) column of a $4^t\times 4^t$ matrix, by  $_{\overrightarrow{v}^{(j)}:}$ we mean the $v_1^{(j)}\cdots v_t^{(j)}$th row, by $_{:\widehat{\overrightarrow{v}^{(j)}}}$ ($_{\widehat{\overrightarrow{v}^{(j)}}:}$) we mean deleting the $v_1^{(j)}\cdots v_t^{(j)}$th column (row) and when discussing vectors instead of matrices, we omit the ``:". Denote $K = \{\overrightarrow{v}^{(1)}th, \cdots, \overrightarrow{v}^{(k)}th\}$ the set of quaternary index corresponding to the $k$ lattice MESs $\{\chi_{\overrightarrow{v}^{(1)}}, \cdots, \chi_{\overrightarrow{v}^{(k)}}\}$ to be distinguished, by subscript $_{:K}$ ($_{K:}$) we mean all the $k$ columns (rows) $\{\overrightarrow{v}^{(1)}th, \cdots, \overrightarrow{v}^{(k)}th\}$ of a $4^t\times 4^t$ matrix and $_{:\widehat{K}}$ ($_{\widehat{K}:}$) mean deleting all these $k$ columns (rows) from it.

Let $\overrightarrow{x}_M$,   $\overrightarrow{x}_N$ be the corresponding array of variables.  We use index $m$'s and $n$'s to denote the variables in $\overrightarrow{x}_M$ and $\overrightarrow{x}_N$ respectively. Rewriting the linear constraint in these notation we have
\small \begin{equation}
A \cdot \overrightarrow{x} = \left[ M,   N \right] \cdot \left[
\begin{array}{c}
\overrightarrow{x}_M\\
\overrightarrow{x}_N
\end{array}\right] = \overrightarrow{b}
\end{equation}
\normalsize
So \small \begin{equation}\begin{split}
\overrightarrow{x}_M &= M^{-1} \cdot \overrightarrow{b} - M^{-1} \cdot N \cdot \overrightarrow{x}_N \\
                     &\triangleq \overrightarrow{b'} + N' \cdot \overrightarrow{x}_N
\end{split}\end{equation}
\normalsize
Namely \begin{equation}
x_m = b'_m + \sum_{n=1}^N N'_{mn} x_n
\end{equation} for all $m = 1, \cdots, M$,   where $b'_m$'s and $N'_{mn}$'s are elements of $\overrightarrow{b'}$ and $N'$. Here,  we obscured the number of columns in $M, N$ with the corresponding block matrix $M, N$. According to these analysis,  rewriting the equation $A\overrightarrow{x}=\overrightarrow{b}$ in program (9) we have
\begin{widetext}
\begin{equation}\begin{split}
M \cdot \arraycolsep = 0em \left[
\begin{array}{c}
\alpha_{\overrightarrow{v}^{(1)}} \\
\alpha_{\overrightarrow{v}^{(2)}} \\
\vdots \\
\alpha_{\overrightarrow{v}^{(k)}} \\
\hline
\overrightarrow{r}_{\widehat{\overrightarrow{v}^{(1)}}}^{(1)} \\
\overrightarrow{r}_{\widehat{\overrightarrow{v}^{(2)}}}^{(2)} \\
\vdots \\
\overrightarrow{r}_{\widehat{\overrightarrow{v}^{(k)}}}^{(k)}
\end{array}\right]
= \arraycolsep = 0em \left[\begin{array}{c}
\overrightarrow{b_1} \\
\overrightarrow{b_2} \\
\vdots \\
\overrightarrow{b_k} \\
\end{array}\right] - \arraycolsep = 0.2em  \left[
\begin{array}{c|cccc}
   I_{:\widehat{K}}   & -\overrightarrow{b_1} &           0           &     \cdots            &     0     \\
   I_{:\widehat{K}}   &          0            & -\overrightarrow{b_2} &     \cdots            &     0     \\
   \vdots             &       \vdots          &         \vdots        &     \ddots            &   \vdots  \\
   I_{:\widehat{K}}   &          0            &           0           &     \cdots            & -\overrightarrow{b_k}
\end{array}\right] \cdot \arraycolsep = 0em \left[
\begin{array}{c}
\\
\overrightarrow{\alpha}_{_{\widehat{K}}} \\
\\
\hline
r_{\overrightarrow{v}^{(1)}}^{(1)} \\
r_{\overrightarrow{v}^{(2)}}^{(2)} \\
\vdots \\
r_{\overrightarrow{v}^{(k)}}^{(k)} \\
\end{array}\right]
+ \arraycolsep = 0.2em \left[\begin{array}{c|cccc}
   I    &   P^{\otimes t}    &   0    &  \cdots  &   0    \\
   I    &   0    &   P^{\otimes t}    &  \cdots  &   0    \\
 \vdots & \vdots & \vdots & \ddots & \vdots    \\
   I    &   0    &   0    & \cdots  &   P^{\otimes t}    \\
\end{array}\right] \cdot \arraycolsep = 0em \left[
\begin{array}{c}
\overrightarrow{\beta} \\
\hline
\overrightarrow{q}^{(1)} \\
\overrightarrow{q}^{(2)} \\
\vdots \\
\overrightarrow{q}^{(k)} \\
\end{array}\right]\\
\\
\end{split}\end{equation}

Acting $M^{-1}$ on it yields
\begin{equation}\begin{split}
\arraycolsep = 0em \left[
\begin{array}{c}
\alpha_{\overrightarrow{v}^{(1)}} \\
\alpha_{\overrightarrow{v}^{(2)}} \\
\vdots \\
\alpha_{\overrightarrow{v}^{(k)}} \\
\hline
\overrightarrow{r}_{\widehat{\overrightarrow{v}^{(1)}}}^{(1)} \\
\overrightarrow{r}_{\widehat{\overrightarrow{v}^{(2)}}}^{(2)} \\
\vdots \\
\overrightarrow{r}_{\widehat{\overrightarrow{v}^{(k)}}}^{(k)}
\end{array}\right] = \arraycolsep = 0em \left[
\begin{array}{c}
1 \\
1 \\
\vdots \\
1 \\
\hline
\sum_{i \neq 1}(\overrightarrow{b_i})_{\widehat{\overrightarrow{v}^{(1)}}} \\
\sum_{i \neq 2}(\overrightarrow{b_i})_{\widehat{\overrightarrow{v}^{(2)}}} \\
\vdots \\
\sum_{i \neq k}(\overrightarrow{b_i})_{\widehat{\overrightarrow{v}^{(k)}}} \\
\end{array}\right] &+ \arraycolsep = 0em \left[\begin{array}{c|cccc}
 0  &  & I_{k\times k} &     &      \\
\hline
I_{\widehat{\overrightarrow{v}^{(1)}}, \widehat{K}} &  0  & (\overrightarrow{b_{2}})_{\widehat{\overrightarrow{v}^{(1)}}} & \cdots & (\overrightarrow{b_{k}})_{\widehat{\overrightarrow{v}^{(1)}}}    \\
I_{\widehat{\overrightarrow{v}^{(2)}}, \widehat{K}} & (\overrightarrow{b_{1}})_{\widehat{\overrightarrow{v}^{(2)}}} &  0  & \cdots & (\overrightarrow{b_{k}})_{\widehat{\overrightarrow{v}^{(2)}}}    \\
\vdots & \vdots & \vdots & \ddots & \vdots \\
I_{\widehat{\overrightarrow{v}^{(k)}}, \widehat{K}} & (\overrightarrow{b_{1}})_{\widehat{\overrightarrow{v}^{(k)}}} & (\overrightarrow{b_{2}})_{\widehat{\overrightarrow{v}^{(k)}}} & \cdots &    0   \\
\end{array}\right] \cdot \arraycolsep = 0em \left[
\begin{array}{c}
\\
\overrightarrow{\alpha}_{_{\widehat{K}}} \\
\\
\hline
r_{\overrightarrow{v}^{(1)}}^{(1)} \\
r_{\overrightarrow{v}^{(2)}}^{(2)} \\
\vdots \\
r_{\overrightarrow{v}^{(k)}}^{(k)} \\
\end{array}\right] \\
&+ \arraycolsep = 0em  \left[\begin{array}{c|cccc}
\overrightarrow{b_1}^T & P_{\overrightarrow{v}^{(1)}:}^{\otimes t} &    0    & \cdots  &     0      \\
\overrightarrow{b_2}^T &    0    & P_{\overrightarrow{v}^{(2)}:}^{\otimes t} & \cdots  &     0      \\
\vdots & \vdots  & \vdots & \ddots & \vdots \\
\overrightarrow{b_k}^T &    0    &    0    & \cdots & P_{\overrightarrow{v}^{(k)}:}^{\otimes t} \\
\hline
-C^{(1)}  & -P_{\widehat{\overrightarrow{v}^{(1)}}:}^{\otimes t} &    D_2^{(1)}    &  \cdots  &     D_k^{(1)}      \\
-C^{(2)}  &    D_1^{(2)}    & -P_{\widehat{\overrightarrow{v}^{(2)}}:}^{\otimes t} &  \cdots  &     D_k^{(2)}      \\
\vdots    &    \vdots    &   \vdots & \ddots &  \vdots \\
-C^{(k)}  &    D_1^{(k)}    &    D_2^{(k)}    &   \cdots  & -P_{\widehat{\overrightarrow{v}^{(k)}}:}^{\otimes t} \\
\end{array}\right] \cdot \arraycolsep = 0em \left[
\begin{array}{c}
\overrightarrow{\beta} \\
\hline
\overrightarrow{q}^{(1)} \\
\overrightarrow{q}^{(2)} \\
\vdots \\
\overrightarrow{q}^{(k)} \\
\end{array}\right]\\
\\
\end{split}\end{equation}
\end{widetext}
where $C^{(i)}$ is the $(4^t-1)\times 4^t$ matrix by firstly replacing rows $\{\overrightarrow{v}^{(1)}th, \cdots , \overrightarrow{v}^{(k)}th\}$ of $I_{4^t\times 4^t}$ as $\overrightarrow{\mathbf{0}}^T$ and secondly deleting the $\overrightarrow{v}^{(i)}$th row. $D_{l}^{(i)}$ is the $(4^t-1)\times 4^t$ matrix by firstly replacing the $\overrightarrow{v}^{(l)}$th row of $\mathbf{0}_{4^t\times 4^t}$ as $P_{\overrightarrow{v}^{(l)}}^{\otimes t}$ and secondly deleting the $\overrightarrow{v}^{(i)}$th row. By letting all the variables in $\overrightarrow{x}_N$ to be zero,   we have solution
\begin{equation}
\overrightarrow{x_0} = \left[\begin{array}{c}
\overrightarrow{b'}\\
\overrightarrow{0}
\end{array}\right]
\end{equation}
corresponding to that of Theorem 1, where
\begin{align*}
(\overrightarrow{x_0})_M = \overrightarrow{b'} = \arraycolsep = 0em \left[
\begin{array}{c}
1 \\
1 \\
\vdots \\
1 \\
\hline
\sum_{i \neq 1}(\overrightarrow{b_i})_{\widehat{\overrightarrow{v}^{(1)}}} \\
\sum_{i \neq 2}(\overrightarrow{b_i})_{\widehat{\overrightarrow{v}^{(2)}}} \\
\vdots \\
\sum_{i \neq k}(\overrightarrow{b_i})_{\widehat{\overrightarrow{v}^{(k)}}}
\end{array}\right]
\end{align*}
and since
\begin{align*}
\overrightarrow{c}_M=
\frac1k\cdot \arraycolsep = 0em \left[\begin{array}{c}
1\\
1\\
\vdots \\
1\\
\hline
\overrightarrow{\mathbf{0}}\\
\overrightarrow{\mathbf{0}}\\
\vdots \\
\overrightarrow{\mathbf{0}}
\end{array}
\right]
\end{align*}
we have the corresponding objective value $\beta'' = z_0 = \sum_{m=1}^M c_m b'_m = 1$.

Also rewriting the objective function we have
\begin{equation}\begin{split}
z &= \overrightarrow{c}_M^T \cdot \overrightarrow{x}_M + \overrightarrow{c}_N^T \cdot \overrightarrow{x}_N \\
  &= \sum_{m=1}^M c_m (b'_m + \sum_{n=1}^N N'_{mn} x_n) + \sum_{n=1}^N c_n x_n \\
  &= \sum_{m=1}^M c_m b'_m + \sum_{n=1}^N (c_n + \sum_{m=1}^M c_m N'_{mn}) x_n \\
  &= z_0 + \sum_{n=1}^N \sigma_n x_n
\end{split}\end{equation}
where
\begin{equation}
\sigma_n = c_n + \sum_{m=1}^M c_m N'_{mn}
\end{equation}
and $N' = \left[ N_{\alpha_{\widehat{K}}},   N_{r_{K}},   N_{\beta},   N_q \right]$ for
\begin{align*}
& \arraycolsep = 0em N_{\alpha_{\widehat{K}}} = \scriptsize \left[\begin{array}{c}
 0   \\
\hline
I_{\widehat{\overrightarrow{v}^{(1)}}, \widehat{K}} \\
I_{\widehat{\overrightarrow{v}^{(2)}}, \widehat{K}} \\
\vdots \\
I_{\widehat{\overrightarrow{v}^{(k)}}, \widehat{K}} \\
\end{array}\right],  \arraycolsep = 0em
N_{r_{K}} = \left[\begin{array}{cccc}
  & I_{k\times k} &     &      \\
\hline
  0  & (\overrightarrow{b_{2}})_{\widehat{\overrightarrow{v}^{(1)}}} & \cdots & (\overrightarrow{b_{k}})_{\widehat{\overrightarrow{v}^{(1)}}}    \\
 (\overrightarrow{b_{1}})_{\widehat{\overrightarrow{v}^{(2)}}} &  0  & \cdots & (\overrightarrow{b_{k}})_{\widehat{\overrightarrow{v}^{(2)}}}    \\
 \vdots & \vdots & \ddots & \vdots \\
 (\overrightarrow{b_{1}})_{\widehat{\overrightarrow{v}^{(k)}}} & (\overrightarrow{b_{2}})_{\widehat{\overrightarrow{v}^{(k)}}} & \cdots &    0   \\
\end{array}\right],  \\
\\
& \arraycolsep = 0em \scriptsize N_{\beta} = \left[\begin{array}{c}
\overrightarrow{b_1}^T \\
\overrightarrow{b_2}^T \\
\vdots \\
\overrightarrow{b_k}^T \\
\hline
-C^{(1)}   \\
-C^{(2)}   \\
\vdots  \\
-C^{(k)}   \\
\end{array}\right],
\arraycolsep = 0em
N_q = \left[\begin{array}{cccc}
 P_{\overrightarrow{v}^{(1)}:}^{\otimes t} &    0    & \cdots  &     0      \\
    0    & P_{\overrightarrow{v}^{(2)}:}^{\otimes t} & \cdots  &     0      \\
 \vdots  & \vdots & \ddots & \vdots \\
    0    &    0    & \cdots & P_{\overrightarrow{v}^{(k)}:}^{\otimes t} \\
\hline
 -P_{\widehat{\overrightarrow{v}^{(1)}}:}^{\otimes t} &    D_2^{(1)}    &  \cdots  &     D_k^{(1)}      \\
    D_1^{(2)}    & -P_{\widehat{\overrightarrow{v}^{(2)}}:}^{\otimes t} &  \cdots  &     D_k^{(2)}      \\
    \vdots    &   \vdots & \ddots &  \vdots \\
    D_1^{(k)}    &    D_2^{(k)}    &   \cdots  & -P_{\widehat{\overrightarrow{v}^{(k)}}:}^{\otimes t} \\
\end{array}\right]
\end{align*}
and

\begin{align*}
& \overrightarrow{c}_{N_{\alpha_{\widehat{K}}}} = \overrightarrow{\mathbf{1}}_{(4^t-k)\times 1},   & \overrightarrow{c}_{N_{r_{K}}} = \overrightarrow{\mathbf{0}}_{k\times 1} \\
\\
& \overrightarrow{c}_{N_\beta} = -\overrightarrow{\mathbf{1}}_{4^t\times 1},   & \overrightarrow{c}_{N_q} = \overrightarrow{\mathbf{0}}_{(k\cdot 4^t)\times 1}
\end{align*}

In order to find another feasible solution $\overrightarrow{x}$ other than $\overrightarrow{x_0}$ having smaller objective value $z < z_0 = 1$,   we need to add some variables $x_n > 0$ other than the original $x_m$'s in $\overrightarrow{x_0}$.  These $n$'s must satisfy:
\begin{equation}
x_m = b'_m + \sum_{\{n|x_n>0\}} N'_{mn} x_n \geq 0 \ \ (m=1,  \cdots ,  M)
\end{equation}
and
\begin{equation}
\sum_{\{n|x_n>0\}} \sigma_n x_n < 0
\end{equation}
That is,  PPT-indistinguishability can be immediately deduced if such $x_n$'s exist, due to Theorem 1.

Whether these $x_n$'s are available depends on $k$ and the set $\big\{\chi_{\overrightarrow{v}^{(1)}}, \cdots , \chi_{\overrightarrow{v}^{(k)}}\big\}$ being distinguished,  for these states are represented in the above linear program by vectors $\overrightarrow{b_1}, \cdots , \overrightarrow{b_k}$,  which affect dominantly as shown by equation (15).

In the next section, we will specialize the above discussion to the case $(t=2, k=4)$ and construct all sets of four PPT-indistinguishable lattice MESs. In section V, we will discuss the PPT-distinguishability of lattice MESs for $t=3$ and $t=4$, where we find out that no ``genuine" small sets of PPT-indistinguishable lattice MESs exist in such cases.

\section{Distinguishability for any four orhogonal ququad-ququad maximally entangled states}

In \cite{Nathanson2013}, the authors find out that any set of three orthogonal ququad-ququad MESs in $\mathbb{C}^4\otimes\mathbb{C}^4$ can be distinguished using only one-way LOCC. The first set of four orthogonal ququad-ququad MESs $\big\{\ket{\psi_0}\otimes\ket{\psi_0},  \ket{\psi_1}\otimes\ket{\psi_1},  \ket{\psi_2}\otimes\ket{\psi_1},  \ket{\psi_3}\otimes\ket{\psi_1}\big\}$ which is LOCC indistinguishable was given in \cite{Duan11}. This result is then generalized to $\mathbb{C}^d\otimes\mathbb{C}^d$ where $d=2^t$ in \cite{Cosentino13}. All of the examples constructed are actually PPT-indistinguishable,  from where  LOCC indistinguishability can be infered. In this section,  we will study the PPT-distinguishability of any set of four orthogonal ququad-ququad MESs,  using the approach of semidefinite program we have discussed in section III. That is, we will consider the specific situation when $t=2$ and $k=4$.

Consider again linear program (9), in this case
\begin{align*}
A = \small \arraycolsep = 0.2em  \left[
\begin{array}{cc|cccc|cccc}
 I & -I & -I &  0 &  0 &  0  &-P^{\otimes 2}&    0   & 0 &   0  \\
 I & -I & 0  & -I &  0 &  0  &    0   &-P^{\otimes 2}& 0 &   0  \\
 I & -I & 0  &  0 & -I &  0  &    0   &   0  &-P^{\otimes 2}& 0 \\
 I & -I & 0  &  0 &  0 & -I  &    0   &    0   & 0 &-P^{\otimes 2}  \\
\end{array}
\right]
\end{align*}
where all block matrices are of dimension $4^2=16$ and so on. Restricting ourself to $k=4$, we can reproduce the analysis in section III until we run into the problem: for which choice of $\{\overrightarrow{b_1}, \overrightarrow{b_2}, \overrightarrow{b_3}, \overrightarrow{b_4}\}$, the replacing variables $x_n$'s are available? We present the following theorem as an answer to this problem:

\begin{theorem}
There are sets of 6 ququad-ququad orthogonal MESs where any 4 states among them are PPT-indistinguishable. These sets are:
\begin{align*}
S_0 = \{\chi_{02}, \chi_{12}, \chi_{20}, \chi_{21}, \chi_{23}, \chi_{32}\},\\
S_1 = \{\chi_{03}, \chi_{13}, \chi_{20}, \chi_{21}, \chi_{22}, \chi_{33}\},\\
S_2 = \{\chi_{00}, \chi_{10}, \chi_{21}, \chi_{22}, \chi_{23}, \chi_{30}\},\\
S_3 = \{\chi_{01}, \chi_{11}, \chi_{20}, \chi_{22}, \chi_{23}, \chi_{31}\},\\
S_4 = \{\chi_{02}, \chi_{12}, \chi_{22}, \chi_{30}, \chi_{31}, \chi_{33}\},\\
S_5 = \{\chi_{03}, \chi_{13}, \chi_{23}, \chi_{30}, \chi_{31}, \chi_{32}\},\\
S_6 = \{\chi_{00}, \chi_{10}, \chi_{20}, \chi_{31}, \chi_{32}, \chi_{33}\},\\
S_7 = \{\chi_{01}, \chi_{11}, \chi_{21}, \chi_{30}, \chi_{32}, \chi_{33}\},\\
S_8 = \{\chi_{00}, \chi_{01}, \chi_{03}, \chi_{12}, \chi_{22}, \chi_{32}\},\\
S_9 = \{\chi_{00}, \chi_{01}, \chi_{02}, \chi_{13}, \chi_{23}, \chi_{33}\},\\
S_{10} = \{\chi_{01}, \chi_{02}, \chi_{03}, \chi_{10}, \chi_{20}, \chi_{30}\},\\
S_{11} = \{\chi_{00}, \chi_{02}, \chi_{03}, \chi_{11}, \chi_{21}, \chi_{31}\},\\
S_{12} = \{\chi_{02}, \chi_{10}, \chi_{11}, \chi_{13}, \chi_{22}, \chi_{32}\},\\
S_{13} = \{\chi_{03}, \chi_{10}, \chi_{11}, \chi_{12}, \chi_{23}, \chi_{33}\},\\
S_{14} = \{\chi_{00}, \chi_{11}, \chi_{12}, \chi_{13}, \chi_{20}, \chi_{30}\},\\
S_{15} = \{\chi_{01}, \chi_{10}, \chi_{12}, \chi_{13}, \chi_{21}, \chi_{31}\}.\\
\end{align*}
Moreover,  any 4 ququad-ququad orthogonal MESs that are PPT-indistinguishable must be contained by one of these sets.
\end{theorem}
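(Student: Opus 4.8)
The plan is to combine the linear-programming reformulation of Section III with a symmetry reduction, turning the classification of PPT-indistinguishable quadruples into a finite check over a handful of orbit representatives. First I would set up the symmetry. Any local unitary $U_\mathcal{A}\otimes V_\mathcal{B}$ preserves PPT-distinguishability, and when $U_\mathcal{A}=U_{\mathcal{A}_1}\otimes U_{\mathcal{A}_2}$, $V_\mathcal{B}=V_{\mathcal{B}_1}\otimes V_{\mathcal{B}_2}$ are built from single-qubit Cliffords (taking $V_{\mathcal{B}_i}=\overline{U_{\mathcal{A}_i}}$) it carries lattice states to lattice states, permuting the Bell label in each coordinate. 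Local Paulis realize the translations (the Klein four-group) on $\{0,1,2,3\}$, and single-qubit Cliffords realize the $S_3$ that fixes $\psi_0$ and permutes $\{\psi_1,\psi_2,\psi_3\}$; together they generate the full symmetric group $S_4$ in each coordinate. Adjoining the coordinate swap $\chi_{v_1 v_2}\mapsto\chi_{v_2 v_1}$, which is the local $\mathrm{SWAP}_{\mathcal{A}_1\mathcal{A}_2}\otimes\mathrm{SWAP}_{\mathcal{B}_1\mathcal{B}_2}$, I obtain a group $G=(S_4\times S_4)\rtimes\mathbb{Z}_2$ acting on the index grid $\{0,1,2,3\}^2$ and preserving PPT-distinguishability. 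Two payoffs follow at once: the sixteen sets $S_0,\dots,S_{15}$ are exactly the sixteen \emph{punctured crosses} (row $a$ together with column $b$, with the center $(a,b)$ deleted), and they form a single $G$-orbit; and every quadruple is $G$-equivalent to one of only ten representatives, sorted by the row- and column-occupancy partitions together with the incidence of the repeated row and column.

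Next I would reduce the theorem to these representatives. By Theorem 1 a quadruple is PPT-indistinguishable iff the distinguished basic solution $\overrightarrow{x_0}$ of the linear program (9) is not optimal, i.e. iff there exist nonbasic variables $x_n\ge 0$ satisfying the feasibility condition (20) and the strict-improvement condition (21). Computing the reduced costs $\sigma_n$ from the explicit blocks of $N'$ (these are governed entirely by the rows of $P^{\otimes 2}$ selected at the positions $\overrightarrow{v}^{(j)}$), I would show that for exactly two orbit types one can switch on a small explicit combination of the $q$- and $\beta$-variables meeting (20) and (21). These two types are $B_1$ (three states collinear, the fourth in a fresh row and a fresh column) and $C$ (two states in a common row and two in a common column, with the shared corner unoccupied), and they are precisely the $(3,1)$- and $(2,2)$-subsets of a punctured cross. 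This proves the sufficiency half: every $4$-subset of each $S_i$ is PPT-indistinguishable, and conversely a PPT-indistinguishable quadruple must be of type $B_1$ or $C$, hence sits inside some $S_i$.

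For the converse I would show each of the remaining eight orbit types is PPT-distinguishable, equivalently that $\overrightarrow{x_0}$ is optimal for (9). The cleanest certificate is primal: for each representative I would exhibit an explicit PPT (indeed one-way LOCC) measurement achieving value $1$ in program (1), or equivalently a feasible dual solution of value $1$. Six of these patterns (the transversal, the $2\times2$ block, the two disjoint dominoes, the staircase, and the two $(2,1,1)$-type configurations in which the repeated row and column meet at an occupied point) are not covered by any cross at all. The two borderline patterns are the four-in-a-line and $B_2$ (three collinear with the fourth sharing the perpendicular line through one of them): each \emph{is} coverable by an ordinary cross, but only by one whose center lies in the set, so the puncture fails and they are not subsets of any $S_i$. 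These are genuinely distinguishable; for instance the four-in-a-line $\{\psi_0\otimes\psi_j\}_{j}$ is distinguished by teleporting $\mathcal{A}_2$ through the fixed pair $\psi_0$ and performing a Bell measurement locally on the recombined pair, an LOCC protocol.

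The main obstacle is this converse direction. The subtlety is that $\overrightarrow{x_0}$ is highly degenerate: many of its basic components $b'_m$ (the entries $\sum_{i\neq j}(\overrightarrow{b_i})_{\widehat{\overrightarrow{v}^{(j)}}}$) vanish, so ``no improving direction'' does \emph{not} follow from the signs of the reduced costs alone; one must rule out improving directions built from several simultaneously activated nonbasic variables, which is exactly what the explicit measurement (or dual certificate) for each representative accomplishes. A secondary but essential point is exhaustiveness: I must verify that the ten orbit representatives really cover all $\binom{16}{4}$ quadruples, so that no PPT-indistinguishable pattern is overlooked; this is a finite check that the invariant (row-partition, column-partition, incidence) is complete and $G$-invariant.
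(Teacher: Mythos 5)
Your proposal is correct in outline but follows a genuinely different route from the paper. The paper never introduces a symmetry group or orbit representatives: it works entirely inside the linear program, exploiting the sign pattern of $P^{\otimes 2}$ (each row has $10$ positive and $6$ negative entries of modulus $\tfrac14$, and any two rows share exactly $2$ common negative positions) to show that an improving direction for the degenerate basic solution $\overrightarrow{x_0}$ exists if and only if one activates exactly five of the six $q$-variables sitting at the negative positions of each row $P^{\otimes 2}_{\overrightarrow{v}^{(j)}:}$, with the deselected column common to all four $j$. This forces the four rows $\overrightarrow{v}^{(1)},\dots,\overrightarrow{v}^{(4)}$ to lie among the six negative positions of a single column of $P^{\otimes 2}$, and the sixteen columns yield exactly the sixteen sets $S_0,\dots,S_{15}$; both directions of the theorem come out of the same counting argument. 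Your plan instead factors the problem through the group $G=(S_4\times S_4)\rtimes\Z_2$ of local Clifford symmetries plus the coordinate swap, identifies the $S_i$ as the sixteen punctured crosses (a correct and illuminating observation, consistent with the paper's closing remark in Section IV about orbits under local unitaries), reduces to ten orbit representatives, handles the two cross-subset types ($B_1$ and $C$) by the LP, and certifies the remaining eight types by explicit measurements. What your approach buys is a cleaner converse: you correctly diagnose that $\overrightarrow{x_0}$ is degenerate, so optimality cannot be read off the reduced costs alone, and a primal certificate (an actual PPT or one-way LOCC measurement for each of the eight distinguishable patterns) closes that gap more convincingly than the paper's somewhat heuristic ``this choice will be adverse'' elimination. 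The cost is that those eight measurements, plus the completeness of the orbit enumeration, still have to be written down, which is most of the remaining work.

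Two caveats. First, your parenthetical ``or equivalently a feasible dual solution of value $1$'' is backwards: dual feasible points of program (2) only give upper bounds on the success probability, and the value-$1$ dual point always exists (it is exactly the solution of Theorem 1), so it certifies nothing; only a primal PPT measurement of value $1$ proves distinguishability. Second, your indistinguishability step cites switching on ``$q$- and $\beta$-variables,'' whereas the feasibility constraints on $\overrightarrow{r}^{(j)}_{\widehat K}$ force the compensating variables to come from $N_{\alpha_{\widehat K}}$ (the paper shows $\beta$-columns are strictly worse than $\alpha$-columns because $-P^{\otimes t}_{\widehat K:}$ has too many negative entries per column); this is a detail you would discover when carrying out the computation, but as stated the recipe would not produce a feasible improving direction.
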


For consideration of readability we are omitting the proof here and to leave it to the Appendix. This result is interesting, for any 3 ququad-ququad orthogonal maximally entangled states are LOCC-distinguishable \cite{Nathanson2013}. Thus, we have found sets of 6 ququad-ququad orthogonal maximally entangled states where any 4 states among them are LOCC-indistinguishable (infered by PPT-indistinguishability) while any 3 among them are LOCC-distinguishable.

In \cite{Tian15, Tian2015}, the authors study the local distinguishability of the orthogonal ququad-ququad MESs, in terms of the relationship between MUBs and LOCC. They proved that every set of four such states is either distinguishable by one-way LOCC or LOCC-indistinguishable and any LOCC-indistinguishable set of four is equivalent to $\big\{\ket{\psi_0}\otimes\ket{\psi_0},  \ket{\psi_1}\otimes\ket{\psi_1},  \ket{\psi_2}\otimes\ket{\psi_1},  \ket{\psi_3}\otimes\ket{\psi_1}\big\}$, up to any local unitary transformation and permutation. Restricting ourselves to PPT-distinguishability and using the method of semidefinite program, we independently derived a compatible result here. Since PPT-indistinguishability immediately infers LOCC-indistinguishability, the sets of 6 ququad-ququad MESs we have just found turn out to be some sort of orbits by $\big\{\ket{\psi_0}\otimes\ket{\psi_0},  \ket{\psi_1}\otimes\ket{\psi_1},  \ket{\psi_2}\otimes\ket{\psi_1},  \ket{\psi_3}\otimes\ket{\psi_1}\big\}$ under local unitary transformations and permutations.

\section{PPT distinguishability for orthogonal lattice maximally entangled states in $\mathbb{C}^{2^t}\otimes\mathbb{C}^{2^t}$ with $t=3$ or $t=4$}

It has been proved in \cite{Nathanson2013} that any $k\leq \frac{d}{2}+1$ orthogonal MESs in $\mathbb{C}^d\otimes\mathbb{C}^d$ can be distinguished with PPT measurements. In \cite{CosentinoR14},  the authors constructed 8 orthogonal lattice MESs in $\mathbb{C}^8\otimes\mathbb{C}^8$ and 15 such states in $\mathbb{C}^{16}\otimes\mathbb{C}^{16}$ which are PPT-indistinguishable. More generally, when $d=2^t$,  they successfully constructed $7d/8 < k \leq d$ lattice MESs in $\mathbb{C}^{d}\otimes\mathbb{C}^{d}$ which are PPT-indistinguishable. But it remains puzzled whether sets of $\frac{d}{2}+1 < k \leq 7d/8$ such orthogonal lattice MESs that are PPT-indistinguishable can be constructed. Fortunately, our approach as shown in section III seems powerful to deal with this problem. As what we'll show in the following, the answer tend to be negative.

When $t=3$, we need only to check the case $k=6$ and $k=7$. We conclude here that no such PPT-indistinguishable sets exists:
\begin{theorem}
Every set of 6 or 7 orthogonal lattice MESs in $\mathbb{C}^8\otimes\mathbb{C}^8$ can be distinguished with PPT measurements.
\end{theorem}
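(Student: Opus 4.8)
The plan is to invoke the optimality criterion of Theorem 1 in the explicit linear-programming form developed in Section III, specialized to $t=3$ (so $4^t=64$) and $k\in\{6,7\}$ (the only cases left open, since $k\le d/2+1=5$ is already distinguishable and $k=8=d$ gives the Cosentino--Russo indistinguishable set). By Theorem 1 the set is PPT-distinguishable exactly when the basic feasible solution $\overrightarrow{x_0}$ of (16) is optimal, i.e.\ when no family of non-basic variables $x_n>0$ can satisfy the feasibility relations (20) together with the descent condition (21). First I would read off from (18)--(19) which reduced costs are negative. Using $\overrightarrow{c}_M=\frac1k[1,\dots,1,0,\dots,0]^T$, a direct computation gives $\sigma_n>0$ for every $\alpha_{\widehat{K}}$ and every $r_K$ variable, while $\sigma_{\beta_n}=0$ for $n\in K$ and $\sigma_{\beta_n}=-\frac1k$ for $n\notin K$, and $\sigma_{q^{(i)}_{\overrightarrow{l}}}=\frac1k P^{\otimes t}_{\overrightarrow{v}^{(i)},\overrightarrow{l}}$. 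Hence the only candidate descent variables are the $\beta_n$ with $n\notin K$ and the $q^{(i)}_{\overrightarrow{l}}$ with $P^{\otimes t}_{\overrightarrow{v}^{(i)},\overrightarrow{l}}<0$; the $\alpha_{\widehat K}$ and $r_K$ variables can enter only to restore feasibility, at a positive cost.

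Next I would convert the feasibility requirement (20) into a clean criterion. The only basic variables vanishing at $\overrightarrow{x_0}$ are the $r^{(j)}_p$ with $p\notin K$, so these are the binding constraints. Tracking the columns of $N_\beta$, $N_{\alpha_{\widehat{K}}}$, $N_q$ through (15), increasing $\beta_p$, $\alpha_p$ and $q^{(j)}_{\overrightarrow{l}}$ changes $r^{(j)}_p$ by $a_p-b_p-\sum_{\overrightarrow{l}}P^{\otimes t}_{p,\overrightarrow{l}}\,\eta^{(j)}_{\overrightarrow{l}}$, whereas the $r_K$ variables touch only the strictly positive $r^{(j)}_{\overrightarrow{v}^{(i)}}$ and may be discarded. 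Writing $w_p=a_p-b_p$ (a free real with objective weight $\tfrac1k$) and $\mu^{(j)}=P^{\otimes t}\eta^{(j)}$, and using $(P^{\otimes t})^2=I$ (so $\eta^{(j)}\ge0\iff P^{\otimes t}\mu^{(j)}\ge0$) together with $\mathbf{1}^T P^{\otimes t}=\mathbf{1}^T$, the whole search collapses to one inequality: a descent direction exists if and only if there are $\mu^{(1)},\dots,\mu^{(k)}\in\mathbb{R}^{4^t}$ with $P^{\otimes t}\mu^{(j)}\ge0$ for all $j$ and
\[
\sum_{p\notin K}\max_{1\le j\le k}\mu^{(j)}_p+\sum_{j=1}^k\mu^{(j)}_{\overrightarrow{v}^{(j)}}<0.
\]
Thus Theorem 4 is equivalent to the assertion that for every index set $K\subset\{0,\dots,63\}$ with $|K|\in\{6,7\}$ the left-hand side is $\ge0$ for all admissible $\mu^{(j)}$ (it is $=0$ at $\mu^{(j)}=0$, so the content is non-negativity).

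The only obstacle is this last step, which is genuinely combinatorial. Because the term $\max_j\mu^{(j)}_p$ can only help the adversary, one cannot reduce to a single-state attack, and the crude bound from $\mathbf{1}^T\mu^{(j)}\ge0$ is far too weak. My plan is to use the eigenstructure $P=\tfrac12\mathbf{1}\mathbf{1}^T-S$, where $S$ is the high-bit swap $0\!\leftrightarrow\!2,\,1\!\leftrightarrow\!3$, so that $P$ has the single eigenvector $v_-=(1,-1,1,-1)$ for eigenvalue $-1$ and a $3$-dimensional $+1$ eigenspace; this pins down the simplicial cone $\{\mu:P^{\otimes t}\mu\ge0\}$, whose extreme rays are the columns of $P^{\otimes t}$. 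I would then cut the $\binom{64}{6}+\binom{64}{7}$ candidate sets down to orbit representatives under the symmetry group fixing $P^{\otimes t}$ — the factor permutations $S_3$ together with, in each tensor slot, the order-$8$ group $\mathrm{Aut}(P)$ of relabelings of $\{0,1,2,3\}$ preserving the pairs $\{0,2\},\{1,3\}$ (which includes all four Pauli translations, so one slot can always be normalized to $\chi_{0\cdots0}$) — and verify the inequality for each representative, either by exhibiting an explicit diagonal PPT measurement that distinguishes the set (equivalently, a feasible dual solution of value $1$) or by a short LP feasibility argument. The hard part is precisely this case analysis: unlike the $t=2,\,k=4$ situation of Theorem 3, where the descent directions do exist and single out the sixteen sets $S_0,\dots,S_{15}$, here one must show that no analogue of those ``bad'' configurations can be realized by only $6$ or $7$ of the $64$ lattice indices, i.e.\ that minimal PPT-indistinguishable lattice sets in $\mathbb{C}^8\otimes\mathbb{C}^8$ necessarily have size $d=8$, so the displayed quantity never drops below zero.
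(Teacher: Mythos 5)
Your reduction is sound and in fact mirrors the paper's own framework: you specialize the linear program of Section III, compute the same reduced costs $\sigma_n$ (positive on $\alpha_{\widehat K}$ and $r_K$, zero on $\beta_K$, negative on $\beta_{\widehat K}$, and $\propto P^{\otimes t}_{\overrightarrow{v}^{(j)},\overrightarrow{l}}$ on the $q$--variables), and correctly identify that only the binding constraints $r^{(j)}_p\ge 0$, $p\notin K$, matter. Your repackaging via $\mu^{(j)}=P^{\otimes t}q^{(j)}$, using $(P^{\otimes t})^2=I$ and $\mathbf{1}^TP^{\otimes t}=\mathbf{1}^T$, into the single criterion $\sum_{p\notin K}\max_j\mu^{(j)}_p+\sum_j\mu^{(j)}_{\overrightarrow{v}^{(j)}}\ge 0$ is a clean and correct equivalent of the paper's conditions (19)--(20), and the eigenstructure $P=\tfrac12\mathbf{1}\mathbf{1}^T-S$ you cite is accurate.

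The genuine gap is that you never establish this inequality: the entire content of the theorem is concentrated in the step you defer to ``verify the inequality for each orbit representative ... by a short LP feasibility argument,'' which is a plan for a (computer-assisted) enumeration, not a proof. The paper closes exactly this gap by hand, using sign-pattern statistics of $P^{\otimes 3}$: each row has $36$ positive and $28$ negative entries of modulus $\tfrac18$; any two rows agree in sign on $20+12$ positions and disagree on $16+16$; and --- the decisive fact --- any three columns have only $4$ rows on which all three entries are negative. Feeding these counts into your criterion (equivalently into the paper's inequalities (29)--(40)) shows that whether one activates $28$, $27$, $26$ or fewer negative $q$--columns per state, the lower bound forced on $\sum_{p\notin K}x_{\alpha_p}$ by feasibility always exceeds the upper bound required for descent when $k=7$ (while it would not for $k=8$, consistent with the known indistinguishable $8$--sets). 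Without some such quantitative argument about the column overlaps of $P^{\otimes 3}$ --- or an actually executed enumeration over the $\binom{64}{7}$ index sets modulo symmetry --- your write-up asserts the conclusion rather than proving it.
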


Also, we are to present only our results here and to leave the proofs to the Appendix. Notice here that when $t=3$ we can also derive a similar result as Theorem 3 for $k=8$, which can be easily deduced from the proof in the Appendix.

For the case $t=4$, we also have a negative answer:
\begin{theorem}
Every set of $k\leq 14$ orthogonal lattice MESs in $\mathbb{C}^{16}\otimes\mathbb{C}^{16}$ can be distinguished with PPT measurements.
\end{theorem}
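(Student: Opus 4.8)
The plan is to prove the statement directly, by exhibiting for every admissible index set a certificate of PPT-distinguishability produced by the machinery of Section III. First I would pass to the optimality criterion of Theorem 1. After applying the dephasing channel $\Phi=\Delta^{\otimes t}$ as in Theorem 2, the program reduces to the linear program (8), and by Theorem 1 the set $K=\{v^{(1)},\dots,v^{(k)}\}\subseteq\{0,1,2,3\}^{t}$ (here $t=4$) is PPT-distinguishable if and only if the feasible point $\overrightarrow{x_0}$ of (10) is optimal. Since the dephased objective $\sum_{w}\max_{j}\bigl[\delta_{w,v^{(j)}}+(P^{\otimes t}q^{(j)})_{w}\bigr]$ is convex and positively homogeneous in the nonnegative variables $q^{(j)}$, with value $k$ at $q=0$, optimality of $\overrightarrow{x_0}$ is equivalent to nonnegativity of every one-sided directional derivative at $q=0$; that is,
\[
D(d)=\sum_{j=1}^{k}\bigl(P^{\otimes t}d^{(j)}\bigr)_{v^{(j)}}+\sum_{w\notin K}\max_{1\le j\le k}\bigl(P^{\otimes t}d^{(j)}\bigr)_{w}\ \ge\ 0
\]
for all families $d^{(1)},\dots,d^{(k)}\in\mathbb{R}^{4^{t}}$ with $d^{(j)}\ge 0$. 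This is the concrete finite form of the ``no improving $x_{n}$'' condition (20)--(21).

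The natural dual certificate for $D\ge 0$ is a diagonal PPT measurement. By LP duality, applied to the assignment reading of the inner maximum, $D\ge 0$ holds for all $d\ge 0$ exactly when there exist functions $f_{1},\dots,f_{k}\colon\{0,1,2,3\}^{t}\to[0,1]$ with $\sum_{j}f_{j}=\mathbf{1}$, with $f_{j}(v^{(i)})=\delta_{ij}$, and with $P^{\otimes t}f_{j}\ge 0$ componentwise for each $j$ -- precisely a dephased PPT POVM perfectly distinguishing $K$. So the whole theorem becomes the statement that for every $K$ with $|K|=k\le 14$ such a splitting of the constant function $\mathbf{1}$ exists. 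The small-$k$ range is then immediate from counting: the columns of $P$ sum to $1$ and all entries of $P^{\otimes t}$ lie in $[-2^{-t},2^{-t}]$, so bounding the inner maximum below by the average over $j$ gives $D(d)\ge(\tfrac1k-2\cdot 2^{-t})\sum_{j}S_{j}$ with $S_{j}=\sum_{l}d^{(j)}_{l}\ge 0$; this is nonnegative whenever $k\le 2^{t-1}$, settling $k\le 8$ for $t=4$.

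For the remaining range $9\le k\le 14$ I would cut the search space with symmetry. The functional $D$, the condition $P^{\otimes t}f_{j}\ge0$, and the cardinality $k$ are all invariant under the group $G$ generated by (i) the order-$8$ group of affine symmetries of $\{0,1,2,3\}\cong\mathbb{Z}_{2}^{2}$ stabilizing the ``$Z$-direction'' $2$ -- these are exactly the permutations $\sigma$ with $P_{\sigma(i),\sigma(j)}=P_{i,j}$, and they contain the four Pauli translations -- acting independently on each of the $t$ tensor factors, together with (ii) the group $S_{t}$ permuting the factors; thus $|G|=8^{t}\,t!$, which for $t=4$ is $8^{4}\cdot 24$. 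It then suffices to verify feasibility of the splitting for one representative $K$ from each $G$-orbit of $k$-element subsets of the $256$ indices, for $k=9,\dots,14$, each check being a finite linear program, equivalently a direct construction of the $f_{j}$.

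The hard part is this range $9\le k\le 14$, and the difficulty is structural rather than arithmetic. A single nonbasic direction never helps: for $d^{(j_{0})}=e_{l}$ one finds $D=2^{-t}\bigl[\pm1+(p_{l}-p^{\mathrm{in}}_{l})\bigr]$, where each column of $P^{\otimes 4}$ carries $p_{l}=(4^{t}+2^{t})/2=136$ positive entries while at most $p^{\mathrm{in}}_{l}\le k\le 14$ of them sit on $K$, so $D\ge 2^{-t}(136-14-1)>0$. But $D$ is only convex, not linear, so one must still exclude improving \emph{combinations} of several $q^{(j)}$ that interact through the shared maximum $\max_{j}$; equivalently, one must show that the prescribed holes $f_{j}(v^{(i)})=\delta_{ij}$ can never conspire to make the splitting infeasible when $k\le 14$. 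This is exactly where the threshold $7d/8$ enters: the Cosentino--Russo family shows the splitting \emph{does} fail at $k=15$, so the content of the theorem is that no obstruction of size $\le 14$ exists. I expect this to be the main obstacle, and I would discharge it by the symmetry-reduced exhaustive verification above -- a computation that is finite precisely because $t=4$ is fixed, which is also why the argument does not obviously extend to general $t$.
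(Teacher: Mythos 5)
Your setup through the reduction to the no-improving-direction criterion is the same as the paper's (Theorem~1 plus the dephasing of Theorem~2), and your averaging bound $D(d)\ge(\tfrac1k-2\cdot2^{-t})\sum_jS_j$ is a clean way to dispose of $k\le 2^{t-1}=8$ (indeed the paper already cites the stronger known fact that $k\le d/2+1$ states are PPT-distinguishable). The problem is that the entire content of Theorem~5 lives in the range $9\le k\le 14$, and there your proposal does not contain a proof: it contains a plan for a finite, symmetry-reduced exhaustive verification that you have not carried out. You correctly identify this as ``the main obstacle,'' but deferring it to an unexecuted computation over orbit representatives of $k$-subsets of $256$ indices (for six values of $k$) leaves the theorem unproved. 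Moreover, the symmetry reduction itself would need justification beyond what you give: you assert that the entrywise stabilizer of $P$ in $S_4$ has order $8$ and that $|G|=8^4\cdot 4!$ suffices to make the orbit count tractable, but you neither verify the group nor estimate the number of orbits, so even the feasibility of the proposed computation is not established.

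The paper's route through this range is analytic rather than computational. It isolates one combinatorial property of the sign pattern of $P^{\otimes 4}$ --- any four columns share $24$ all-negative rows, whereas any five or more columns share fewer than $8$ --- and then reruns the chain of inequalities from the proof of Theorem~4 (the analogues of (29)/(31), (32)/(35), (36)/(40)), in which one successively deselects columns $n_{q^{(j)}}$ with $P^{\otimes 4}_{\overrightarrow{v}^{(j)},n}<0$ and shows the resulting lower and upper bounds on $\sum_{m\notin K}x_{\alpha_m}$ remain contradictory until $k\ge 15$. That counting step, keyed to the ``common all-negative rows'' statistic, is exactly the ingredient your argument is missing; without it (or the completed computation standing in for it) there is a genuine gap. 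If you want to salvage your approach, either supply and prove the column-intersection counts for $P^{\otimes 4}$ and adapt the Theorem~4 chain, or actually perform and report the symmetry-reduced LP verification.
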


\section{Conclusion and Discussion}
In this paper, we study the PPT-distinguishability of lattice maximally entangled states using a different approach of semidefinite program (or linear program since the PPT-distinguishability problem of such sort of ``lattice" states have a linear structure \cite{Cosentino13}) which enable us to figure out all the PPT-indistinguishable sets.

The problem has been left over by \cite{CosentinoR14} whether ``genuine small" sets (with $k<=7d/8$ where $d=2^t$) of lattice maximally entangled states that are PPT-indistinguishable can be found. For cases $t=3$ and $t=4$, we show that such sets are all PPT-dintinguishable so the answer is no. For more general cases where $t > 4$, the problem becomes too tedious for us to discuss here. Nevertheless, we believe that the answer tends to be negative likewise for we've run numerical program to search (stochastically) for examples of these PPT-indistinguishable sets (also using the approach we described in section III but by computer evaluation), but no example emerges after long exhausted computer running. Since these PPT-indistinguishable sets always appear in groups if there were any (just as what we've shown in Theorem 3), the probability for stochastic search to hit them would not be too small and hence they ought to emerge sometimes during our long-run experiment (due to the law of large numbers), if they existed \cite{1}.

\bigskip
\noindent{\bf Acknowledgments}\, \, 
This work was supported by the NSFC through Grant No.
11571119.

\section{Appendix}

\centering {\bf (i) Proof of Theorem 3 }
\begin{proof}
We will discuss in what situation variables $x_n > 0$ other than the original chosen $x_m$'s are available such that equation (19) and (20) satisfies.

For variables $\alpha_{\overrightarrow{v}^{(1)}}, \alpha_{\overrightarrow{v}^{(2)}}, \alpha_{\overrightarrow{v}^{(3)}}, \alpha_{\overrightarrow{v}^{(4)}}$ and  $r_{\overrightarrow{v}^{(l)}}^{(j)}\ (j,l \in \{1, 2, 3, 4\}; l\neq j)$,  they equal to $1$ in $\overrightarrow{x_0}$,   so equations (19) are easy to satisfy,   once we multiply all the $x_n$'s with a very small positive number which will certainly not break equation (20).  So we may only consider the variables in $\overrightarrow{r}_{\widehat{K}}^{(j)}(j\in \{1, 2, 3, 4\})$,  which equal to $0$ in $\overrightarrow{x_0}$. We now list out $\sigma_n$'s for all columns in $N$ and also the $N'_{mn}$ for all rows corresponding to these variables (in matrix form):
\begin{equation}\begin{split}
& \arraycolsep = 0.4em  \begin{array}{|c|c|c|c|c|}
\hline
    &  N_{\alpha_{\widehat{K}}}  &   N_{r_{K}}  & N_{\beta_{K}} & N_{\beta_{\widehat{K}}} \\
\hline
\sigma_n \text{'s} &  \overrightarrow{\mathbf{1}}^T  &  \overrightarrow{\mathbf{1}}^T  & \overrightarrow{\mathbf{0}}^T & -\overrightarrow{\mathbf{1}}^T  \\
\hline
\overrightarrow{r}_{\widehat{_{K}}}^{(1)} & I & 0 & 0 & -I \\
\hline
\overrightarrow{r}_{\widehat{_{K}}}^{(2)} & I & 0 & 0 & -I \\
\hline
\overrightarrow{r}_{\widehat{_{K}}}^{(3)} & I & 0 & 0 & -I \\
\hline
\overrightarrow{r}_{\widehat{_{K}}}^{(4)} & I & 0 & 0 & -I \\
\hline
\end{array}\\
& \arraycolsep = 0.2em \begin{array}{|c|c|c|c|c|}
\hline
    &   N_{q^{(1)}} &  N_{q^{(2)}} & N_{q^{(3)}}  &  N_{q^{(4)}} \\
\hline
\sigma_n \text{'s} &  P_{\overrightarrow{v}^{(1)}:}^{\otimes 2}  &  P_{\overrightarrow{v}^{(2)}:}^{\otimes 2} & P_{\overrightarrow{v}^{(3)}:}^{\otimes 2}  & P_{\overrightarrow{v}^{(4)}:}^{\otimes 2} \\
\hline
\overrightarrow{r}_{\widehat{K}}^{(1)} & -P_{\widehat{K}:}^{\otimes 2} & 0 & 0 & 0 \\
\hline
\overrightarrow{r}_{\widehat{K}}^{(2)} & 0 & -P_{\widehat{K}:}^{\otimes 2} & 0 &  0  \\
\hline
\overrightarrow{r}_{\widehat{K}}^{(3)} & 0 & 0 & -P_{\widehat{K}:}^{\otimes 2} & 0 \\
\hline
\overrightarrow{r}_{\widehat{K}}^{(4)} & 0 & 0 & 0 & -P_{\widehat{K}:}^{\otimes 2}   \\
\hline
\end{array}
\end{split}\end{equation}

It is obvious that the crucial factor about this problem is the structure of matrix $P^{\otimes 2}$. Since
\begin{equation*}
P = \frac12 \begin{bmatrix}
1 & 1 & -1 & 1 \\
1 & 1 & 1 & -1 \\
-1 & 1 & 1 & 1 \\
1 & -1 & 1 & 1 \\
\end{bmatrix},
\end{equation*}
the elements of $P^{\otimes 2}$ consist of only $\frac{1}{4}$ and $-\frac{1}{4}$,  with more positive elements than negative elements at each row (or column since $P^{\otimes 2}$ is symmetric). Specifically,  each row of $P^{\otimes 2}$ has $16$ elements,  $10$ of which are positive and the other $6$ are negative. Moreover,  for any two different rows,  $4$ places among those of the $10$ positive elements at one row take negative on the other row; Conversely,  $4$ places among those of the $6$ negative elements at one row take positive on the other row. This guarantee the orthogonality of matrix $P^{\otimes 2}$.
\begin{align*}
\begin{array}{cc@{}c|c@{}c}
                         &          \multicolumn{2}{c|}{_{10}}       &      \multicolumn{2}{c}{_{6}}       \\
\substack{some\\row:}    & \multicolumn{2}{c|}{\overbrace{++++++++++}} & \multicolumn{2}{c}{\overbrace{------}} \\
\substack{another\\row:} & \underbrace{++++++} & \underbrace{----} & \underbrace{++++} & \underbrace{--} \\
                         &    ^{6}    &      ^{4}          &      ^{4}          &    ^{2}     \\
\end{array} \\
\text{\scriptsize {Any two rows of $P^{\otimes 2}$ (ignore the order of elements)}}
\end{align*}

To add positive variables $x_n \in \overrightarrow{x_N}$ such that $\sum_{\{n|x_n>0\}} \sigma_n x_n < 0$ and $r_m^{(j)} =  \sum_{\{n|x_n>0\}} N'_{mn} x_n \geq 0 (j \in \{1, 2, 3, 4\}; m \notin K)$,  we need only to consider columns in $N_{\alpha_{\widehat{K}}}$ and $N_q=[N_{q^{(1)}}, N_{q^{(2)}}, N_{q^{(3)}}, N_{q^{(4)}}]$. To explain why it is so,  first consider the condition $\sum_{\{n|x_n>0\}} \sigma_n x_n < 0$. At first sight,  we may probably choose from columns $N_{\beta_{\widehat{_{K}}}}$ or from $N_{q^{(j)}}$(where $P_{\overrightarrow{v}^{(j)},  n}^{\otimes 2}<0)$. But further considering $r_m^{(j)} =  \sum_{\{n|x_n>0\}} N'_{mn} x_n \geq 0 (j \in \{1, 2, 3, 4\}; m \notin K )$,  choosing $n \in N_{\beta_{\widehat{K}}}$ will be significantly adverse. Since each column of $P^{\otimes 2}$ has $4$ more elements of positive than negative,  after deleting $4$ rows $K = \{\overrightarrow{v}^{(1)}th, \overrightarrow{v}^{(2)}th, \overrightarrow{v}^{(3)}th, \overrightarrow{v}^{(4)}th\}$,  each column of $-P_{\widehat{K}}^{\otimes 2}$ has negative elements at least as many as the positive ones. Namely,  even if we had fortunately chosen all the columns $n \in N_{q^{(j)}}$ such that for a fixed $n$ elements $P_{\overrightarrow{v}^{(j)},n}^{\otimes 2}$'s are all positive $(j \in \{1, 2, 3, 4\})$,  at least one of the rows $m \notin K$ must satisfy $\sum_{\{n\in N_{q^{(j)}}|x_n>0\}} -P_{mn}^{\otimes 2} x_n \leq 0 $ (actually,  for any $4$ fixed rows $K = \{\overrightarrow{v}^{(1)}th, \overrightarrow{v}^{(2)}th, \overrightarrow{v}^{(3)}th, \overrightarrow{v}^{(4)}th\}$ rows in $P^{\otimes 2}$,  at most one column of $P_{K:}^{\otimes 2}$ can take positive $\frac14$ simultaneously). At this situation,  strict inequality $\sum_{\{n\in N_{q^{(j)}}|x_n>0\}} -P_{mn}^{\otimes 2} x_n < 0$ is unavoidable for some $m \notin K$. Instead,  we should choose columns $n_{\alpha} \in N_{\alpha_{\widehat{K}}}$ in order to satisfy $r_m^{(j)} =  \sum_{\{n|x_n>0\}} N'_{mn} x_n \geq 0 \ (j\in\{1, 2, 3, 4\}; m \notin K )$. Actually,  same value of the variable $n_{\alpha_{\widehat{K}}}=n_{\beta_{\widehat{K}}}$ in same place of $N_{\alpha_{\widehat{K}}}$ and $N_{\beta_{\widehat{K}}}$ will cancel each other in equation (19) and (20) (the $\sigma_n$'s and $N_{mn}$'s are opposite as shown in (21)),  therefore we handle this problem by allowing $x_{n_{\alpha_{\widehat{K}}}}$ to be negative and ignore $N_{\beta_{\widehat{K}}}$. Since $\sigma_n = 1$ for $n \in N_{\alpha_{\widehat{K}}}$,  we may choose also columns $n \in N_{q^{(j)}}(P_{\overrightarrow{v}^{(j)},  n}^{\otimes 2}<0)$ to satisfy $\sum_{\{n|x_n>0\}} \sigma_n x_n < 0$. Now suppose that we take $x_{n_{q^{(j)}}}>0$ for all those $n_{q^{(j)}}$'s such that $P_{\overrightarrow{v}^{(j)}, n_{q^{(j)}}}^{\otimes 2}<0 \ (j\in \{1, 2, 3, 4\})$. Since each row of $P^{\otimes 2}$ has $6$ negative elements,  denoting these variables $x_{q_{1}^{(j)}}, \cdots, x_{q_{6}^{(j)}}$. Without loss of generality we may assume that $x_{q_{1}^{(j)}}= \cdots =x_{q_{6}^{(j)}}=q^{(j)}>0 \ (j \in \{1, 2, 3, 4\})$, so we have
\begin{equation}
\sum_{m \notin K} x_{\alpha_{m}} < \frac14\sum_{j \in \{1, 2, 3, 4\}}6 \cdot q^{(j)}
\end{equation}
Fixing $j \in \{1, 2, 3, 4\}$, for $P_{\widehat{\overrightarrow{v}^{(j)}}}^{\otimes 2}$ on these $6$ columns, each row takes $(4+2-)$,  so we have
\begin{equation}
x_{\alpha_m}-\frac14\cdot 4q^{(j)}+\frac14\cdot 2q^{(j)} \geq 0 \ (\text{for all } m \notin K)
\end{equation}
and thus
\begin{equation}
\sum_{m \notin K} x_{\alpha_{m}} \geq (16-4) \cdot \frac14 \cdot (4-2)q^{(j)} \ (j \in \{1, 2, 3, 4\})
\end{equation}
It's obvious that (22) contradicts (24). So we should take less $n_{q^{(j)}}$'s for those $P_{\overrightarrow{v}^{(j)}, n_{q^{(j)}}}^{\otimes 2}<0$ or take some $n_{q^{(j)}}$ such that $P_{\overrightarrow{v}^{(j)}, n_{q^{(j)}}}^{\otimes 2}>0$. As we have already discuss above,  the latter choice will be adverse, for that will decrease the right hand side of (22) while increase the right hand side of (24). Now suppose that $x_{q_{1}^{(j)}}= \cdots =x_{q_{5}^{(j)}}=q^{(j)}>0 \ (j \in \{1, 2, 3, 4\})$,  so
\begin{equation}
\sum_{m \notin K} x_{\alpha_{m}} < \frac14\sum_{j \in \{1, 2, 3, 4\}}5 \cdot q^{(j)}
\end{equation}
For any $5$ columns chosen like this for a fixed $\overrightarrow{v}^{(j)}$,  10 rows of $P_{\widehat{\overrightarrow{v}^{(j)}}}^{\otimes 2}$ take $(3+2-)$,  5 rows of $P_{\widehat{\overrightarrow{v}^{(j)}}}^{\otimes 2}$ take $(4+1-)$,  because when we deselect $x_{q_{6}^{(j)}}$,  we decrease $1+$ from $(4+2-)$ for 10 rows and $1-$ from $(4+2-)$ for 5 rows in $P_{\widehat{\overrightarrow{v}^{(j)}}}^{\otimes 2}$. So for $m \notin K$ we have
\begin{equation}
 x_{\alpha_m} \geq \frac14\cdot 3q^{(j)}-\frac14\cdot 2q^{(j)}
\end{equation}
or
\begin{equation}
 x_{\alpha_m} \geq \frac14\cdot 4q^{(j)}-\frac14\cdot 1q^{(j)}
\end{equation}
To avoid the contradiction like what between (22) and (24), we should hope more of $m \notin K$ to satisfies (26) rather than (27), therefore columns corresponding to $x_{q_{6}^{(1)}}=x_{q_{6}^{(2)}}=x_{q_{6}^{(3)}}=x_{q_{6}^{(4)}}=0$ in $P_{\overrightarrow{v}^{(1)}}^{\otimes 2}, P_{\overrightarrow{v}^{(2)}}^{\otimes 2}, P_{\overrightarrow{v}^{(3)}}^{\otimes 2}, P_{\overrightarrow{v}^{(4)}}^{\otimes 2}$ respectively should be equal. Moreover, by the 4 rows $K=\{\overrightarrow{v}^{(1)}th, \overrightarrow{v}^{(2)}th, \overrightarrow{v}^{(3)}th, \overrightarrow{v}^{(4)}th\}$ chosen like this, inequality (26) and (27) are independent of the particular $j \in \{1,2,3,4\}$ (for fixed $m \notin K$). This is crucial for we must let $\sum_{m \notin K} x_{\alpha_{m}}$ to be as small as not to contradict (25). In this case,
\begin{equation}\begin{split}
\sum_{m \notin K} x_{\alpha_{m}} \geq 10 \cdot \frac14 \cdot q^{(j)} + (6-4) & \cdot \frac14 \cdot 3q^{(j)} \\
                                                                           & (j\in \{1, 2, 3, 4\})
\end{split}\end{equation}
Now (25) and (28) do not contradict any more and so (19) and (20) can be all satisfied. We may further take less $n_{q^{(j)}}$'s for those $P_{\overrightarrow{v}^{(j)}, n_{q^{(j)}}}^{\otimes 2}<0$, for example, assume that $x_{q_{5}^{(j)}}=x_{q_{6}^{(j)}}=0 \ (j \in \{1, 2, 3, 4\})$. However, this is impossible, because for any two columns of $P^{\otimes 2}$, only 2 rows are like ``$--$", while 6 rows are like ``$++$" and 8 rows are like ``$+-$" hence inequality like (26) and (27) cannot be satisfied by all $j \in \{1, 2, 3, 4\}$ simultaneously and thus contradiction like between (22) and (24) will occur. So what we have found as between (25) and (28) is the only situation that (19) and (20) are satisfied.

By now,  we have found a way to construct four ququad-ququad orthogonal maximally entangled states: choosing any one column from $P^{\otimes 2}$, then 6 of the elements are negative and any 4 rows $\{\overrightarrow{v}^{(1)}th, \overrightarrow{v}^{(2)}th, \overrightarrow{v}^{(3)}th, \overrightarrow{v}^{(4)}th\}$ among these 6 correspond to 4 PPT-indisinguishable states $\{\ket{\chi_{\overrightarrow{v}^{(1)}}}, \ket{\chi_{\overrightarrow{v}^{(2)}}}, \ket{\chi_{\overrightarrow{v}^{(3)}}}, \ket{\chi_{\overrightarrow{v}^{(4)}}}\}$. For example,  the four states presented in \cite{Duan11} $\big\{\ket{\psi_0}\otimes\ket{\psi_0},  \ket{\psi_1}\otimes\ket{\psi_1},  \ket{\psi_2}\otimes\ket{\psi_1},  \ket{\psi_3}\otimes\ket{\psi_1}\big\}$ can be constructed by first choose column $23th$ and then choose row $\{00th, 11th, 21th, 31 th\}$ where the crossing elements all take negative:

\begin{align*}
\arraycolsep = 0.2em \begin{array}{l|c|c|c|c|c|cl}
\cline{5-5}
\text{\scriptsize 00}&++-+&++-+&--+&-&&++-+& \leftarrow\ket{\chi_{00}}\\
\text{\scriptsize 01}&+++-&+++-&---&+&&+++-\\
\text{\scriptsize 02}&-+++&-+++&+--&-&&-+++& \leftarrow\text{\small negative}\\
\text{\scriptsize 03}&+-++&+-++&-+-&-&&+-++& \leftarrow\text{\small negative}\\
\cline{1-7}
\text{\scriptsize 10}&++-+&++-+&++-&+&&--+-\\
\text{\scriptsize 11}&+++-&+++-&+++&-&&---+& \leftarrow\ket{\chi_{11}}\\
\text{\scriptsize 12}&-+++&-+++&-++&+&&+---\\
\text{\scriptsize 13}&+-++&+-++&+-+&+&&-+--\\
\cline{1-7}
\text{\scriptsize 20}&--+-&++-+&++-&+&&++-+\\
\text{\scriptsize 21}&---+&+++-&+++&-&&+++-& \leftarrow\ket{\chi_{21}}\\
\text{\scriptsize 22}&+---&-+++&-++&+&&-+++\\
\text{\scriptsize 23}&-+--&+-++&+-+&+&&+-++\\
\cline{1-7}
\text{\scriptsize 30}&++-+&--+-&++-&+&&++-+\\
\text{\scriptsize 31}&+++-&---+&+++&-&&+++-& \leftarrow\ket{\chi_{31}}\\
\text{\scriptsize 32}&-+++&+---&-++&+&&-+++\\
\text{\scriptsize 33}&+-++&-+--&+-+&+&&+-++\\
\cline{5-5}
\multicolumn{4}{c}{}&\multicolumn{1}{c}{\uparrow}  \\
\multicolumn{4}{c}{}&\multicolumn{1}{c}{\text{\small 23}} \\
\end{array}
\end{align*}

Actually,  since conditions (19) and (20) are necessary and sufficient (by Theorem 1),  we have proved that all sets of four ququad-ququad orthogonal maximally entangled states which are PPT-indistinguishable can be constructed in this way. Therefore,  the theorem has been proved.
\end{proof}

\centering {\bf (ii) Proof of Theorem 4 }
\begin{proof}
For case $t=3$, it suffices to proof that every 7 orthogonal lattice MESs in $\mathbb{C}^8\otimes\mathbb{C}^8$ are PPT-distinguishable. Similarly as what we have discussed in the proof of Theorem 3, we can list out $\sigma_n$'s for all columns in $N$ and also the $N'_{mn}$ for all rows corresponding to the variables in $\overrightarrow{r}_{\widehat{K}}^{(j)}(j \in \{1,2, \cdots, 7\})$ which equal to $0$ in $\overrightarrow{x_0}$ (in matrix form):
\begin{align*}
& \arraycolsep = 0.4em  \begin{array}{|c|c|c|c|c|}
\hline
    &  N_{\alpha_{\widehat{K}}}  &   N_{r_{K}}  & N_{\beta_{K}} & N_{\beta_{\widehat{K}}} \\
\hline
\sigma_n \text{'s} &  \overrightarrow{\mathbf{1}}^T  &  \overrightarrow{\mathbf{1}}^T  & \overrightarrow{\mathbf{0}}^T & -\overrightarrow{\mathbf{1}}^T  \\
\hline
\overrightarrow{r}_{\widehat{K}}^{(1)} & I & 0 & 0 & -I \\
\hline
\overrightarrow{r}_{\widehat{K}}^{(2)} & I & 0 & 0 & -I \\
\hline
\vdots & \vdots & \vdots & \vdots & \vdots \\
\hline
\overrightarrow{r}_{\widehat{K}}^{(7)} & I & 0 & 0 & -I \\
\hline
\end{array}\\
& \arraycolsep = 0.2em \begin{array}{|c|c|c|c|c|}
\hline
    &   N_{q^{(1)}} &  N_{q^{(2)}} & \cdots  &  N_{q^{(7)}} \\
\hline
\sigma_n \text{'s} &  P_{\overrightarrow{v}^{(1)}:}^{\otimes 3}  &  P_{\overrightarrow{v}^{(2)}:}^{\otimes 3} & \cdots  & P_{\overrightarrow{v}^{(7)}:}^{\otimes 3} \\
\hline
\overrightarrow{r}_{\widehat{K}}^{(1)} & -P_{\widehat{K}:}^{\otimes 3} & 0 & \cdots & 0 \\
\hline
\overrightarrow{r}_{\widehat{K}}^{(2)} & 0 & -P_{\widehat{K}:}^{\otimes 3} & \cdots &  0  \\
\hline
\vdots & \vdots & \vdots & \ddots & \vdots \\
\hline
\overrightarrow{r}_{\widehat{K}}^{(7)} & 0 & 0 & \cdots & -P_{\widehat{K}:}^{\otimes 3}   \\
\hline
\end{array}
\end{align*}
Also in this case, it is the structure of matrix $P^{\otimes 3}$ that affects, where
\begin{equation*}
P^{\otimes 3} = \frac12 \begin{bmatrix}
P^{\otimes 2} & P^{\otimes 2} & -P^{\otimes 2} & P^{\otimes 2} \\
P^{\otimes 2} & P^{\otimes 2} & P^{\otimes 2} & -P^{\otimes 2} \\
-P^{\otimes 2} & P^{\otimes 2} & P^{\otimes 2} & P^{\otimes 2} \\
P^{\otimes 2} & -P^{\otimes 2} & P^{\otimes 2} & P^{\otimes 2} \\
\end{bmatrix}.
\end{equation*}
The elements of $P^{\otimes 2}$ are either $\frac18$ or $-\frac18$, with 36 positive ones and 28 negatives at each row(or column). Moreover, for any two different rows,  $16$ places among those of the $36$ positive elements at one row take negative on the other row; Conversely,  $16$ places among those of the $28$ negative elements at one row take positive on the other row:
\begin{align*}
\begin{array}{cc@{}c|c@{}c}
                         &          \multicolumn{2}{c|}{_{36}}       &      \multicolumn{2}{c}{_{28}}       \\
\substack{some\\row:}    & \multicolumn{2}{c|}{\overbrace{+++\cdots++++\cdots+}} & \multicolumn{2}{c}{\overbrace{--\cdots--\cdots-}} \\
\substack{another\\row:} & \underbrace{+++\cdots++} & \underbrace{--\cdots-} & \underbrace{++\cdots+} & \underbrace{-\cdots-} \\
                         &    ^{20}    &      ^{16}          &      ^{16}          &    ^{12}     \\
\end{array} \\
\text{\scriptsize {Any two rows of $P^{\otimes 3}$ (ignore the order of elements)}}
\end{align*}
As what we have discussed in the proof of theorem 3, to add positive variables $x_n \in \overrightarrow{x_N}$ such that $\sum_{\{n|x_n>0\}} \sigma_n x_n < 0$ and $r_m^{(j)} =  \sum_{\{n|x_n>0\}} N'_{mn} x_n \geq 0 \ (j \in \{1, 2, \cdots, 7\}; m \notin K)$,  we need only to consider columns $N_{\alpha_{\widehat{K}}}$ and $N_q=[N_{q^{(1)}}, N_{q^{(2)}}, \cdots, N_{q^{(7)}}]$. Since $\sigma_n = 1$ for $n \in N_{\alpha_{\widehat{K}}}$,  we may choose also columns $n \in N_{q^{(j)}}(P_{\overrightarrow{v}^{(j)},  n}^{\otimes 3}<0)$ to satisfy $\sum_{\{n|x_n>0\}} \sigma_n x_n < 0$. Now suppose that we take $x_{n_{q^{(j)}}}>0$ for all those $n_{q^{(j)}}$'s such that $P_{\overrightarrow{v}^{(j)}, n_{q^{(j)}}}^{\otimes 3}<0 \ (j \in  \{1, 2, \cdots, 7\})$. Since each row of $P^{\otimes 3}$ has $28$ negative elements,  denoting these variables $x_{q_{1}^{(j)}}= \cdots =x_{q_{28}^{(j)}}=q^{(j)}>0 \ (j \in \{1, 2, \cdots, 7\})$,  we have
\begin{equation}
\sum_{m \notin K} x_{\alpha_{m}} < \frac18\sum_{j \in \{1, 2, \cdots, 7\}}28 \cdot q^{(j)}
\end{equation}
For each fixed $j \in \{1, 2, \cdots, 7\}$, on those $28$ columns,  $P^{\otimes 3}$ takes $(16+12-)$,  so we have
\begin{equation}
x_{\alpha_m}-\frac18\cdot 16q^{(j)}+\frac18\cdot 12q^{(j)} \geq 0 \ (m \notin K)
\end{equation}
and thus
\begin{equation}\begin{split}
\sum_{m \notin K} x_{\alpha_{m}} \geq (64-7) \cdot \frac18 \cdot & (16-12)q^{(j)} \\
 & (j \in \{1, 2, \cdots, 7\})
\end{split}\end{equation}
It's obvious that (29) contradicts (31). So we should take less $n_{q^{(j)}}$'s for those $P_{\overrightarrow{v}^{(j)}, n_{q^{(j)}}}^{\otimes 3}<0$ or take some $n_{q^{(j)}}$ such that $P_{\overrightarrow{v}^{(j)}, n_{q^{(j)}}}^{\otimes 3}>0$. As we what have already discussed in theorem 3,  the latter choice will be adverse, for that will decrease the right hand side of (29) while increase the right hand side of (31). Now suppose that $x_{q_{1}^{(j)}}= \cdots =x_{q_{27}^{(j)}}=q^{(j)}>0 \ (j \in \{1, 2, \cdots, 7\})$,  so
\begin{equation}
\sum_{m \notin K} x_{\alpha_{m}} < \frac18\sum_{j \in \{1, 2, \cdots, 7\}}27 \cdot q^{(j)}
\end{equation}
For any $27$ columns chosen like this for a particular $\overrightarrow{v}^{(j)}$,  36 rows of $P_{\widehat{\overrightarrow{v}^{(j)}}}^{\otimes 3}$ take $(15+12-)$,  27 rows of $P_{\widehat{\overrightarrow{v}^{(j)}}}^{\otimes 3}$ take $(16+11-)$,  because when we deselect $x_{q_{28}^{(j)}}$,  we decrease $1+$ from $(16+12-)$ for 36 rows and $1-$ from $(16+12-)$ for 27 rows in $P_{\widehat{\overrightarrow{v}^{(j)}}}^{\otimes 3}$. So for $m \notin K$ we have
\begin{equation}
 x_{\alpha_m} \geq \frac18\cdot 15q^{(j)}-\frac14\cdot 12q^{(j)}
\end{equation}
or
\begin{equation}
 x_{\alpha_m} \geq \frac18\cdot 16q^{(j)}-\frac14\cdot 11q^{(j)}
\end{equation}
As in the proof of theorem 3, columns corresponding to $x_{q_{28}^{(1)}}=x_{q_{28}^{(2)}}=\cdots=x_{q_{28}^{(7)}}=0$ in $P_{\overrightarrow{v}^{(1)}}^{\otimes 3}, P_{\overrightarrow{v}^{(2)}}^{\otimes 3}, \cdots, P_{\overrightarrow{v}^{(7)}}^{\otimes 3}$ respectively should be equal. In this case,
\begin{multline}
\sum_{m \notin K} x_{\alpha_{m}} \geq 36 \cdot \frac18 \cdot 3q^{(j)} + (28-7) \cdot \frac18 \cdot 5q^{(j)} \\
   (j \in \{1, 2, \cdots, 7)\}
\end{multline}
However (32) still contradicts (35). So we can further assume that $x_{q_{1}^{(j)}}= \cdots =x_{q_{26}^{(j)}}=q^{(j)}>0 \ (j \in \{1, 2, \cdots, 7\})$,  so
\begin{equation}
\sum_{m \notin K} x_{\alpha_{m}} < \frac18\sum_{j \in \{1, 2, \cdots, 7\}}26 \cdot q^{(j)}
\end{equation}
For the $26$ columns chosen like this for a particular $\overrightarrow{v}^{(j)}$,  20 rows of $P_{\widehat{\overrightarrow{v}^{(j)}}}^{\otimes 3}$ take $(14+12-)$,  32 rows of $P_{\widehat{\overrightarrow{v}^{(j)}}}^{\otimes 3}$ take $(15+11-)$ and 11 rows of $P_{\widehat{\overrightarrow{v}^{(j)}}}^{\otimes 3}$ take $16+10-$  because when we deselect $x_{q_{27}^{(j)}}$ and $x_{q_{28}^{(j)}}$,  we decrease $2+$ from $(16+12-)$ for 20 rows, $1+1-$ from $(16+12-)$ for 32 rows and $2-$ from $(16+12-)$ for 11 rows in $P_{\widehat{\overrightarrow{v}^{(j)}}}^{\otimes 3}$. So for $m \notin K$ we have
\begin{equation}
 x_{\alpha_m} \geq \frac18\cdot 14q^{(j)}-\frac18\cdot 12q^{(j)}
\end{equation}
or
\begin{equation}
 x_{\alpha_m} \geq \frac18\cdot 15q^{(j)}-\frac18\cdot 11q^{(j)}
\end{equation}
or
\begin{equation}
 x_{\alpha_m} \geq \frac18\cdot 16q^{(j)}-\frac18\cdot 10q^{(j)}
\end{equation}
To avoid the contrdiction like what between (29) and (31), we should hope more of $m \notin K$ to satisfies (37) or (38) rather than (39), therefore the columns corresponding to $x_{q_{28}^{(1)}}=x_{q_{28}^{(2)}}=\cdots=x_{q_{28}^{(7)}}=0$ in $P_{\overrightarrow{v}^{(1)}}^{\otimes 3}, P_{\overrightarrow{v}^{(2)}}^{\otimes 3}, \cdots, P_{\overrightarrow{v}^{(7)}}^{\otimes 3}$ respectively should be equal and so do the columns corrsponding to $x_{q_{27}^{(1)}}=x_{q_{27}^{(2)}}=\cdots=x_{q_{27}^{(7)}}=0$. In this case,
\begin{multline}
\sum_{m \notin K} x_{\alpha_{m}} \geq 20 \cdot \frac18 \cdot 2q^{(j)} + 32 \cdot \frac18 \cdot 4q^{(j)} \\
+ (12-7) \cdot \frac18 \cdot 6q^{(j)} \ (j \in \{1, 2, \cdots, 7\})
\end{multline}
Unfortunately, (36) and (40) are still contradicted. Henceforth, we may further deselect $x_{q_{26}^{(j)}}=x_{q_{27}^{(j)}}=x_{q_{28}^{(j)}}=0 \ (j \in \{1,2,\cdots,k\})$. However this is impossible as what we have discussed in Theorem 3, for any three columns of $P^{\otimes 3}$, only $4$ rows take ``$---$"(Note that if we are distinguishing $k=8$ states instead of 7 states, there won't exist contradiction between (36) and (40) or even between (32) and (35), which shows that some sets of 8 states are PPT-indistinguishable). In sum up, we have shown that for any 7 given rows $\{\overrightarrow{v}^{(1)}th, \overrightarrow{v}^{(2)}th, \cdots, \overrightarrow{v}^{(7)}th\}$, no variables $x_n > 0$ other than the original chosen $x_m$'s can be found such that equation (19) and (20) are satisfied, so all sets of 7 orthogonal lattice MESs are PPT-distinguishable.
\end{proof}

\centering {\bf (iii) Proof of Theorem 5 }
\begin{proof}
For case $t=4$, we can find variables $x_n's$ fulfilling (19) and (20) only when $k=16$ or $k=15$, as long as noticing the fact that for any four columns of $P^{\otimes 4}$, $24$ rows are like ``$----$", while for any five or more columns only less than $8$ rows are like ``$-----$". The other process is similar as those of Theorem 4 and the calculations are routine. Therefore any 14 orthogonal lattice MESs are PPT-distinguishable \cite{1}.
\end{proof}

\end{document}